\newif
\newtheorem{theorem}{Theorem}
\newtheorem{definition}{Definition}
\newtheorem{notation}{Notation}
\newcommand{\D}{\mathcal{D}}
\newcommand{\BIT}{\begin{itemize}}
\newcommand{\EIT}{\end{itemize}}
\newcommand{\BEN}{\begin{enumerate}}
\newcommand{\EEN}{\end{enumerate}}
\newcommand{\mc}[1]{\mathcal{#1}}
\newcommand{\mb}[1]{\mathbf{#1}}
\newcommand{\Ev}{\mathcal{E}}
\newcommand{\I}{\mathcal{I}}
\newcommand{\J}{\mathcal{J}}
\newcommand{\T}{\mathcal{T}}
\newcommand{\M}{\mathcal{M}}
\newcommand{\B}{\mathcal{B}}
\newcommand{\A}{\mathcal{A}}
\newcommand{\C}{\mathcal{C}}
\newcommand{\yb}{\mathbf{y}}
\newcommand{\xb}{\mathbf{x}}
\newcommand{\ra}{\rightarrow}
\newcommand{\EX}{{\noindent\em {Example: }}}
\newcommand{\EEX}{\hspace*{\fill}~\QED\par\endtrivlist\unskip}
\newcommand{\nra}{\nrightarrow}
\begin{document}

\title{Parity Forwarding for Multiple-Relay Networks
\thanks{
Manuscript has been submitted to the {\it IEEE Transactions on
Information Theory} on November 12, 2007. The materials in this
paper have been presented in part at the IEEE International
Symposium on Information Theory (ISIT), Seattle, WA, U.S.A., July
2006, and in part at the IEEE International Symposium on
Information Theory (ISIT), Nice, France, June 2007. The authors are
with The Edward S. Rogers Sr. Department of Electrical and
Computer Engineering, University of Toronto, 10 King's College
Road, Toronto, Ontario M5S 3G4, Canada. e-mails:
peyman@comm.utoronto.ca, weiyu@comm.utoronto.ca. Phone:
416-946-8665. FAX: 416-978-4425. Kindly address correspondence
to Peyman Razaghi (peyman@comm.utoronto.ca). }}
\author{Peyman Razaghi, {\it Student Member, IEEE}, and
    Wei Yu, {\it Member, IEEE}}
\date{\today}
\maketitle

\begin{abstract}
This paper proposes a relaying strategy for the multiple-relay
network  in which each relay decodes a selection of transmitted
messages by other transmitting terminals, and forwards parities of
the decoded codewords. This protocol improves the previously
known achievable rate of the decode-and-forward (DF) strategy for
multirelay networks by allowing relays to decode only a selection of
messages from relays with strong links to it. Hence, each relay may
have several choices as to which messages to decode, and for a
given network many different parity forwarding protocols may exist.
A tree structure is devised to characterize a class of parity
forwarding protocols for an arbitrary multirelay network. Based on
this tree structure, closed-form expressions for the achievable rates
of these DF schemes are derived. It is shown that parity forwarding
is capacity achieving for  new forms of degraded relay networks.
\end{abstract}

\normalsize

\section{Introduction}
A relay network consists of a pair of source and destination
terminals and a number of { relays}. The relays have no message of
their own and only help the source communicate to the destination.
Fig.~\ref{fig:multirelays} shows a schematic of a network with $K$
relays in which the relays are numbered from 1 to $K$, the source is
represented by index 0, and the destination  is represented by index
$K+1$. The random variables $X_0,X_1,\ldots,X_K$ represent the
transmitted signals, and $Y_1,Y_2,\ldots,Y_{K+1}$ represent the
received signals, at respective nodes. The channel is assumed to be
memoryless and is defined by the joint probability distribution
function (pdf) $p(y_1,y_2,\ldots,y_{K+1}|x_0,x_1,\ldots,x_{K})$.

Although the capacity of the simple yet fundamental
single-relay network introduced by Van der Meulen in
\cite{van_der_meulen} is still open, the  recent surge of
interests in relay networks has resulted in new communication
protocols and achievable rates for multirelay networks
\cite{xie_kumar_rate,gupta_kumar_rate,kramer_gastpar_gupta,reznik_kulkarni_verdu,gastpar_vetterli,rost_generalized_2007}.
Among the various relaying strategies, the classical
decode-and-forward (DF) strategy, proposed by Cover and El
Gamal \cite{cover_elgamal}, has been of  particular interest.
In the DF scheme for the single-relay channel,   the relay
decodes the source message and forwards  a bin index for it to
the destination. This fundamental relaying strategy is proved
to be capacity achieving for a degraded single-relay network
\cite[Theorem~1]{cover_elgamal}.

Generalizations of the single-relay DF scheme to the multirelay case
have  been studied in
\cite{aref_thesis,gupta_kumar_rate,kramer_gastpar_gupta,xie_kumar_rate,gastpar_vetterli,reznik_kulkarni_verdu}.
The best known DF strategy  for multirelay networks is called the
multihop protocol, devised in \cite{xie_kumar_rate}. In the
multihop scheme, the source  and a group of relays that have
already decoded the source message replicate and cooperatively
transmit the source message to the next relay. This process is
repeated until all relays decode the source message and
cooperatively transmit the source message to the destination. The
decoding procedures at the receivers take into account that the
message is transmitted over several blocks. It has been proved that
multihopping along with optimal decoding is capacity achieving for
the generalized multirelay version of the single-relay degraded
channel \cite{xie_kumar_rate}.

This paper shows that the multihop protocol can be further
improved. The main bottleneck of the multihop relaying strategy is
that all relay terminals must decode the source message in order to
participate in the relaying protocol. This can be restrictive, because
the source rate is constrained by the decodability conditions at
those relays with poor links from the source. However, it is not
necessary to require all relays to decode the source message. The
multihop DF rate is improved if relays are allowed to choose an
appropriate set of messages to decode. This set of messages may
include not only the source message, but also messages from other
relays. This flexibility can significantly improve the DF rate.

\begin{figure}
  \centering
  \includegraphics[scale=0.5]{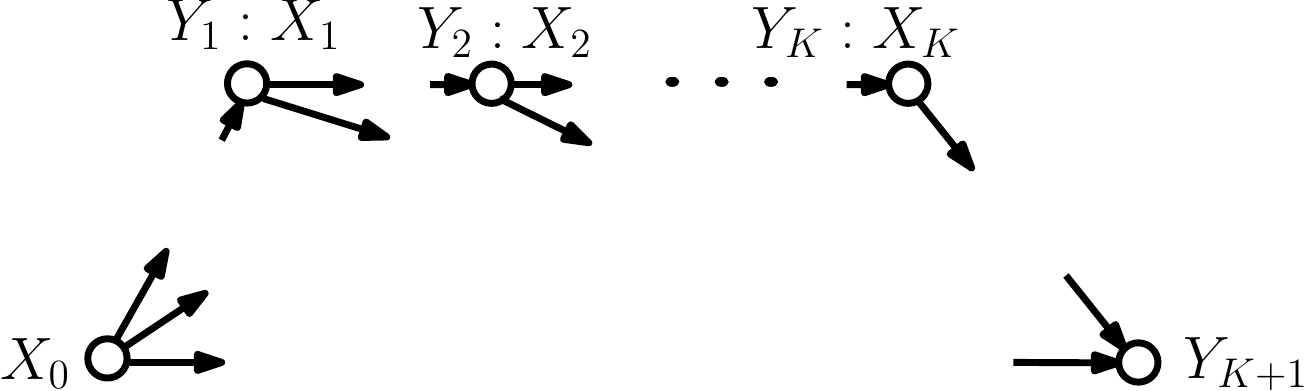}
  \caption{A general network with multiple relays.\label{fig:multirelays}}
\end{figure}

In this paper, a class of DF protocols, named parity forwarding, is
proposed to improve the DF rate for the multirelay network. In
parity forwarding, the source and the relays may transmit multiple
messages. Relays decode a selection of messages transmitted by
other nodes, called the decoding set. The messages sent by  relays
are bin indices containing partial information for the messages in
the decoding set.  A key feature of the parity forwarding protocol as
compared to the multihop protocol is that the relay messages
partially describe the decoded data at the relays, rather than fully
replicate the decoded data. To decode a message, the decoders
(i.e., the final destination or intermediate relays) identify all
messages that contain partial information about the message to be
decoded, and perform joint decoding by combining all partial
information.

For a given multirelay network, several parity forwarding protocols
are possible, depending on the messages decoded or transmitted by
the relays. This paper characterizes a class of parity forwarding
protocols in an structured way via a message tree. The message
tree characterizes the dependencies between messages in the
network. Using this tree structure, the appropriate joint decoding
procedures at receiver nodes are identified and closed-form
expressions  for the achievable rates are derived. Further, it is
shown that under certain degradedness conditions, the rates
achievable by parity forwarding are the capacities.



The proposed relaying scheme is named parity forwarding, because
in a linear coding context, forwarding message bin indices by the
relays is equivalent to forwarding parity bits. This is because parity
bits of a linear code  partition the linear codebook into subcodes,
which are analogous to bins \cite{wyner_binning}. We use the
terms ``bin index'' and ``parity message'' interchangeably
throughout the paper. In a related work \cite{peyman_wei_it}, the
interpretation of bins as parities  also allowed the design of
practical capacity-approaching codes for the single-relay and certain
forms of multirelay networks.

The parity forwarding strategy proposed in this paper bears a
resemblance to network coding \cite{li_yeung_cai}. In both
schemes, the intermediate nodes forward parities, instead of
replicating decoded messages. To decode messages, the embedded
information in parities is combined at each receiver.

The rest of the paper is organized as follows:  We begin with a
review of existing single-relay DF schemes in Section
\ref{sec:1relay}, and introduce joint decoding for the
single-relay network. The parity forwarding protocol is
introduced in Section \ref{sec:2relays} for a network with two
relays. The general multiple-relay parity forwarding protocol
is described in Section \ref{sec:multirelay}. Section
\ref{sec:examples} illustrates key features of the parity
forwarding protocol through several examples. The capacity of
new forms of multirelay networks are also derived in Section
\ref{sec:examples}. Finally, Section \ref{sec:conclusions}
concludes the paper.

\section{Single-relay DF: A joint decoding approach \label{sec:1relay}}
The DF scheme was first introduced in \cite{cover_elgamal} for the
single-relay channel and was shown to be capacity achieving for the
degraded  relay channel.  Since then, several variants of DF have
been developed and extended to multirelay networks
\cite[Ch.~4]{aref_thesis},
\cite{gupta_kumar_rate,gupta_capacity_2000,schein_gallager,gastpar_kramer_gupta,xie_kumar_scaling,xie_kumar_rate,reznik_kulkarni_verdu}.
In \cite{xie_kumar_rate},  the single-relay DF strategy is extended
to multirelay networks and is shown to attain the capacity of the
generalized multirelay version of the degraded single-relay channel.
In this section, a single-relay DF approach based on joint decoding
is proposed. This new approach combines the advantages of the DF
methods in \cite{cover_elgamal} and \cite{xie_kumar_rate} and
allows us to further improve the multirelay DF rate.

\subsection{ Regular Encoding vs. Irregular Encoding}
There are several variations of the single-relay DF scheme,
depending on their respective encoding and decoding methods. The two
main encoding methods are the regular encoding approach of
\cite{xie_kumar_rate} (first proposed in \cite{carleial} for a
different channel) and the DF encoding approach originally
introduced in \cite{cover_elgamal}, which is later named irregular
encoding in \cite{kramer_gastpar_gupta}.

A key discriminating feature of regular encoding and irregular
encoding is the rate of the relay message. In regular encoding,
the relay message rate is equal to the source message rate,
whereas in irregular encoding, the relay message rate can be
smaller than the source message rate.

In both methods, encoding is performed blockwise. Let
$m^t_0\in\{1,2,\ldots,2^{nR_0}\}$ and
$m^t_1\in\{1,2,\ldots,2^{nR_1}\}$ be the source and the relay
messages in block $t$. In regular encoding, the relay message
in block $t$ is equal to the source message in block $t-1$,
i.e., $m^t_1=m^{t-1}_0$, hence the relay message rate is
limited to $R_1=R_0$.  On the other hand, in irregular
encoding, $m^t_1$ is a random bin index for $m^{t-1}_0$, which
allows for more encoding flexibility. The bin index $m^t_1$ is
computed according to $m^t_1=P_{\B_1}(m^{t-1}_0)$, where
$P_{\B_1}(\, \cdot \,)$ is the binning function and $\B_1$ is a
uniform random partition of $\{1,2,\ldots,2^{nR_0}\}$ as
defined in the following.

\begin{definition}[Binning Function] Let
$\mc{B}_y=\{\mc{S}_1,\mc{S}_2,\cdots,\mc{S}_{2^{nR_y}}\}$ be a
uniform random partition of $\{1,2,\cdots,2^{nR_x}\}$ into
$2^{nR_y}$ bins $\mc{S}_k$ of size $2^{n(R_x-R_y)}$ indexed by
$\mc{Y}=\{1,2,\cdots,2^{nR_y}\}$. The  binning function
$P_{\mc{B}_y}(\, \cdot \,)$ returns the bin index of its argument
with respect to  $\mc{B}_y$, i.e., $v=P_{\mc{B}_y}(u)$ if and only
if $u\in \mc{S}_v$.
\end{definition}

Codebook construction is the same for both regular and irregular
encoding schemes. At the relay, $2^{nR_1}$ random codewords
$\mb{x}_1(m_1)$  of length $n$ are generated according to
$p(x_1)$ to encode $m^t_1$ in block $t$. The source codebook is
constructed using superposition encoding  to encode both $m^t_0$
and $m^t_1$ in block $t$\cite{cover_elgamal}. This is to allow the
source to cooperate with the relay, as the source in each block
knows the message of the relay. More specifically, the source
generates $2^{nR_0}$ codebooks, one for every $\mb{x}_1(m_1)$
codeword. For each codeword $\mb{x}_1(m_1)$, $2^{nR_0}$
codewords $\mb{x}_1(m_0|m_1)$ are randomly generated according
to $p(x_0|x_1)$.

Both schemes give the same DF rate for the single-relay channel.
However, the generalization of DF to multirelay networks is more
straightforward for regular encoding, since the relay messages are
replications of the source message. This results in the multihop
scheme. On the other hand, irregular encoding potentially allows for
DF schemes other than multihop relaying because of its flexibility
with respect to the relay messages. To the best of our knowledge,
multirelay DF scheme based on irregular encoding has not been
proposed prior to this work, due to limitations of successive
decoding when used along with irregular encoding in multirelay
networks.

\subsection{Successive Decoding vs. Window Decoding}
Successive decoding \cite{cover_elgamal} and window decoding
\cite{carleial} are the corresponding decoding approaches for
the irregular and regular encoding methods, respectively (see
\cite{kramer_gastpar_gupta} for a detailed summary of DF
decoding approaches). In successive decoding, the destination
first decodes the relay message, then decodes the source
message with the help of the decoded relay message. The
resulting constraints on the rates of the source and the relay
messages to ensure successful decoding  at the destination are
summarized below \cite{cover_elgamal}:
\begin{subequations}\label{eq:1relay:tx-rx:bound}
\begin{align}
R_1&\leq I(X_1;Y_2) \label{eq:1relay:tx-rx:bound:1}\\
R_0&\leq I(X_0,X_1;Y_2)\\
R_0&\leq I(X_0;Y_2|X_1)+R_1
\end{align}
\end{subequations}

When generalizing to multiple-relay networks, successive decoding
of messages is restrictive if the relay message is to be decoded at
multiple receivers (e.g., at another relay and at the final
destination). In this case, because the downlink channels from the
relay to different downstream receivers have different capacities,
the rate of the relay message must be smaller than the minimum of
the downlink capacities to ensure successful decoding of the relay
message at all intended receivers (e.g., see
\cite[(2)-(4)]{gupta_kumar_rate} for constraints of this type).
However, this is not optimal, since the extra rates of the downlink
channels with higher capacities are wasted.

The rate limitation problem of successive decoding in multirelay
networks is resolved in  the multihop strategy by using regular
encoding and window decoding (see \cite{xie_kumar_rate} for the
details). In regular encoding, all relays transmit replications of the
source message, thus the rates of all messages transmitted over
downlink channels are the same as the source rate (thus, no
channel operates at a rate below its capacity). The source message
is decoded by observing the received sequences over a window of
successive blocks. However, the multihop protocol is not the only
possible DF protocol for multiple-relay networks. The next
subsection describes a joint decoding approach that allows for more
flexible multirelay DF methods.

\subsection{Joint Decoding for Irregular Encoding \label{sec:1relay:joint_dec}}
Irregular encoding corresponds to forwarding bin indices for
the received messages at the relay terminal. The key element
that allows irregular encoding to be generalized to multirelay
networks is a joint decoding procedure that avoids the
shortcomings of successive decoding. To illustrate joint
decoding, we consider the single-relay channel in this section.
Joint decoding for  a multirelay network follows the same
principle. Note that in contrast to the multirelay networks,
joint decoding has no effect on the single-relay DF rate.

The decoding procedure at the relay is similar to the one in
\cite{cover_elgamal} or \cite{xie_kumar_rate} (details are omitted
for brevity). The relay in each block decodes the source message
provided that the source rate satisfies the following constraint.
\begin{equation}\label{eq:1relay:tx_re:bound} R_0\leq
I(X_0;Y_1|X_1).
\end{equation}

The destination jointly decodes the pair of messages $m^{t-1}_0$
and $m^t_1$ over the two successive blocks $t-1$ and $t$. Assume
that in block $t$, the destination has already decoded $m^{t-2}_0$
and $m^{t-1}_1$ correctly. (It will become clear later that this is a
valid assumption.) Knowing $m^{t-1}_1$ in block  $t$, the
destination finds a pair of messages $m_0$ and $m_1$ satisfying
$m_1=P_{\B_1}(m_0)$, such that given $\xb_1(m_1)$,
$\xb_0(m_0|m^{t-1}_1)$ is jointly typical with $\mb{y}_2^{t-1}$,
the received sequence in block $t-1$,  and $\xb_1(m_1)$ is jointly
typical with $\mb{y}_2^t$, the received sequence in block $t$.

The probability that an incorrect $\xb_0$ is jointly typical with
$\mb{y}^{t-1}_2$ given $\xb_1$ is asymptotically equal to
$2^{-nI(X_0;Y_2|X_1)}$ \cite[Theorem~15.2.3]{cover_elements};
similarly, the probability that an incorrect $\xb_1$ is jointly typical
with $\mb{y}^t_2$ is asymptotically bounded by
$2^{-nI(X_1;Y_2)}$. Let $\Ev_1$ denote the event that $\xb_0$ is
decoded incorrectly, and $\Ev_2$ be the event that $\xb_1$ is
decoded incorrectly. The decoding error probability is given by
$\text{Pr}(\Ev_1)$, which can be bounded as
$\text{Pr}(\Ev_1)=\text{Pr}\left(\Ev_1\cap(\Ev_2\cup\Ev_2^c)\right)\leq
\text{Pr}(\Ev_1\cap\Ev_2)+\text{Pr}(\Ev_1\cap\Ev_2^c)$. Now,
$\text{Pr}(\Ev_1\cap\Ev_2)$ is asymptotically bounded by
$2^{nR_0}2^{-nI(X_0;Y_2|X_1)}2^{-nI(X_1;Y_2)}$. This is
because $m_1$ is a function of $m_0$, hence there are $2^{nR_0}$
pairs of $m_0$ and $m_1$ messages in total. On the other hand,
$\text{Pr}(\Ev_1\cap\Ev_2^c)$ is asymptotically bounded by
$2^{n(R_0-R_1)}2^{-nI(X_0;Y_2|X_1)}$, since knowing $m_1$
(i.e., $\Ev_2^c $ has occurred), there remain $2^{n(R_0-R_1)}$
choices for $m_0$. Hence, the decoding error probability at the
destination  asymptotically approaches zero if
\begin{subequations}\label{eq:1relay:tx-rx:bound:joint}
\begin{align}
\iftwocol
R_0&\leq I(X_0;Y_2|X_1)+I(X_1;Y_2)\\
&=I(X_0,X_1;Y_2)\nonumber\\
\else
\intertext{and}
R_0&\leq I(X_0;Y_2|X_1)+I(X_1;Y_2)=I(X_0,X_1;Y_2)\\
\fi
R_0&\leq I(X_0;Y_2|X_1)+R_1
\end{align}
\end{subequations}
Note that the rate of the relay message $R_1$ appears only on the
right-hand side of \eqref{eq:1relay:tx-rx:bound:joint} and thus is
not constrained. The fact that the rate of the relay message is
unconstrained is the key advantage of joint decoding as compared
to successive decoding.

Joint decoding along with irregular encoding combines the
benefits of irregular encoding and regular encoding by
providing rate flexibility for relay messages.
Fig.~\ref{fig:joint_irregular}\ describes the advantage of
combining
 joint decoding and irregular encoding. In successive decoding along
with irregular encoding, the relay message rate $R_1$ must satisfy
$R_1\leq I(X_1;Y_2)$. On the other hand, in window decoding along
with regular encoding,  the relay message has to be equal to the
source message, which only allows multihop type of schemes and
restricts $R_1$ to be equal to $R_0$. The combination of joint
decoding and irregular encoding allows the relay messages to have
any rate $R_1$ satisfying $I(X_1;Y_2)\leq R_1\leq R_0$. This
flexibility in choosing the rate of relay messages is the key to
extend multirelay DF beyond the multihop scheme.

\begin{figure}
\includegraphics[scale=0.56]{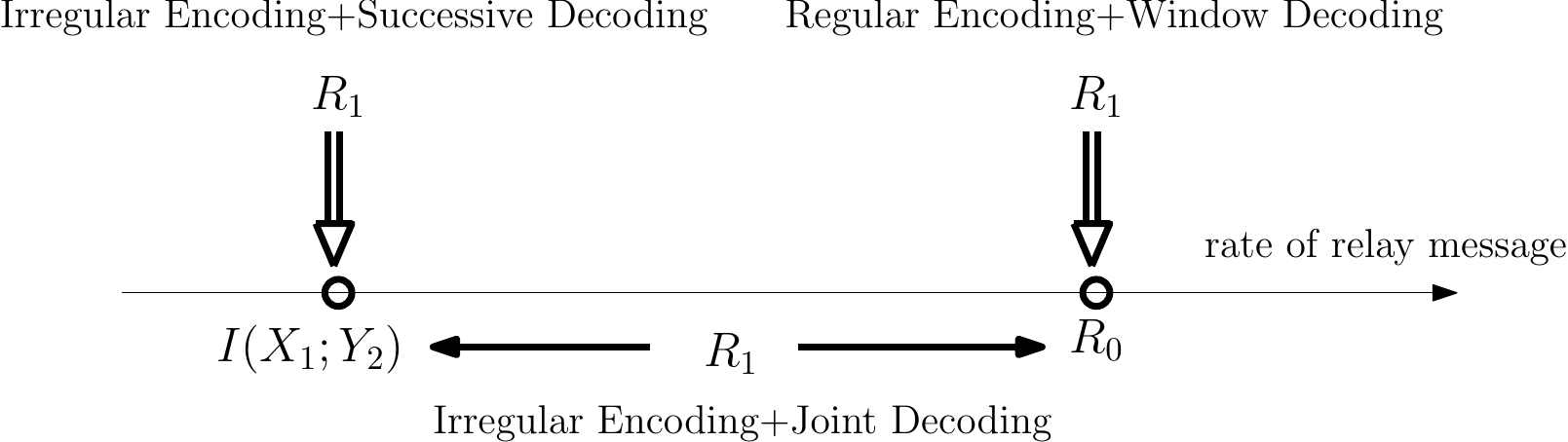}
\caption{If joint decoding of the source and the relay message is
performed at the destination, the rate of the relay message can be
flexible. This flexibility is the key to obtain higher DF rates in a
multiple-relay network. \label{fig:joint_irregular}}
\end{figure}

\section{Parity Forwarding in a Two-Relay Network \label{sec:2relays}}
We begin our discussion of the multirelay network by providing two
examples of parity forwarding DF protocols for a network with two
relays.  The first protocol, named Protocol A, demonstrates that
irregular encoding along with joint decoding achieves the best
previous multirelay DF rate, obtained via regular encoding in
\cite{xie_kumar_rate}. The second protocol, named Protocol B,
demonstrates that the DF rate of \cite{xie_kumar_rate} can be
further improved. Protocol B also identifies the capacity of a new
degraded form of two-relay networks. Later in Section
\ref{sec:examples}, a third two-relay DF protocol with a different
achievable rate is described as a more involved example of the
parity forwarding protocol. This third protocol is also capacity
achieving for a class of two-relay networks under certain
determinism and degradedness conditions as discussed in Section
\ref{sec:examples}. In general,  many different DF protocols are
possible  in a multirelay network. Subsequent sections describe a
structured characterization of a variety of DF protocols for an
arbitrary multirelay network.

\begin{figure}[t]
  \begin{center}
  \iftwocol
      \scalebox{0.5}{\includegraphics{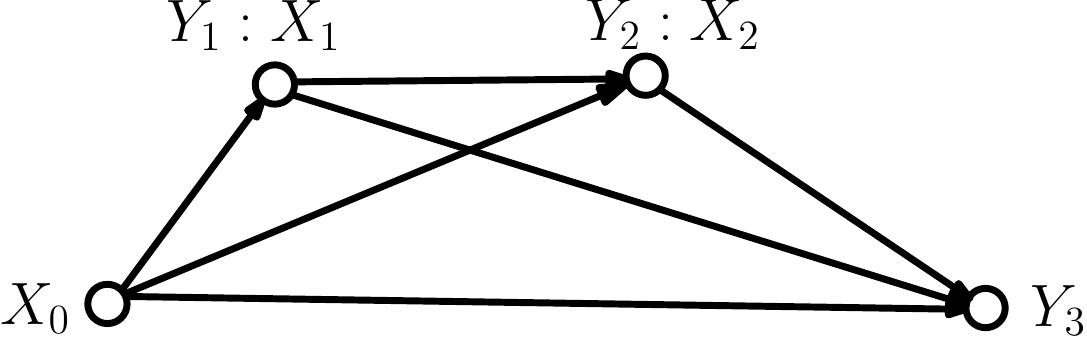}}
  \else
        \scalebox{0.7}{\includegraphics{2relays.pdf}}
  \fi
      \caption{A general two-relay network.\label{fig:2relays}}
  \end{center}
\end{figure}

\subsection{Encoding}
Fig.~\ref{fig:2relays} shows the schematic of a network with two
relays. Let $m_0$ denote the source message, $m_1$ the message
transmitted by the first relay, and $m_2$ the message of the
second relay.  In  Protocol A, the first relay decodes the source
message  $m_0$ and forwards $m_1$, a random bin index for
$m_0$, to help the second relay decode $m_0$; the second relay
decodes $m_0$ with the help of $m_1$ and forwards $m_2$, a
random bin index for $m_1$; the destination decodes $m_0$ with
the help of both $m_1$ and $m_2$. In Protocol B, again the first
relay decodes the source message $m_0$ and forwards $m_1$, a bin
index for $m_0$, to the second relay; the second relay decodes
{\em only} $m_1$, without attempting to decode $m_0$, and
forwards $m_2$, a random bin index for $m_1$, to the destination;
the destination decodes the source message $m_0$ with the help of
$m_1$ and $m_2$, both functions of $m_0$.

The difference between the two protocols is that in Protocol A, the
second relay decodes the source message $m_0$, whereas in
Protocol B, the second relay only  decodes $m_1$, the message of
the first relay. This difference results in different achievable DF
rates for the two protocols. This  can be best observed if  we
consider the following extreme cases: if the channel from the source
to the second relay is completely blocked and the source can only
communicate to the second relay through the first relay, Protocol A
achieves a lower rate as compared to Protocol B, since in Protocol A
the source data rate is bounded by the decodability condition of the
source message at the relay, whereas the rate of the source
message in Protocol B is not so constrained. However, if the
channel from the source to the second relay is strong, then Protocol
A may give a higher rate, since in Protocol A, the second relay can
take advantage of $m_1$ when decoding $m_0$, whereas in
Protocol B, the second relay receives no extra help when  decoding
$m_1$.


The encoding scheme is the same for both protocols. A binning
scheme along with block Markov encoding is performed. Let $n$
denote the length of each transmission block. In block $t$, the
messages of the source, the first relay, and the second relay,
$m^t_0$, $m^t_1$, and $m^t_2$ are selected from the sets
$\{1,2,\ldots,2^{nR_0}\}$, $\{1,2,\ldots,2^{nR_1}\}$, and
$\{1,2,\ldots,2^{nR_2}\}$, respectively. In each block, the
message of the first relay is a random bin index (parity message)
for the previous source message, and the message of the second
relay is a parity message for the previous message of the first
relay. The three messages in the network are related as
$m^t_1=P_{\B_1}(m^{t-1}_0)$, and
$m^t_2=P_{\B_2}(m^{t-1}_1)$, where $\B_1$ and $\B_2$ are
random partitions of sizes $2^{nR_1}$ and $2^{nR_2}$,
respectively.

Random codebooks are used to encode the three messages. To
encode $m_2$ at the second relay, $2^{nR_2}$ random codewords
$\mb{x}_2(m_2)$ of length $n$ are generated according to
$p(x_2)$. The codebook used at the first relay depends on the
codeword transmitted by the second relay. This is to allow the first
relay to cooperate with the second relay. As the first relay in each
block knows the message of the second relay, it can generate
$2^{nR_2}$ codebooks, one for each one of $\mb{x}_2(m_2)$
codewords. To generate a codebook depending on the codeword
$\mb{x}_2(m_2)$, $2^{nR_1}$ codewords $\mb{x}_1(m_1|m_2)$
are randomly generated according to $p(x_1|x_2)$. Similarly, in
order to allow the source to cooperate with the messages of the
first and the second relay, which are known to the source in each
block, the source codebook is constructed by generating
$2^{nR_0}$ codewords $\mb{x}_0(m_0|m_1,m_2)$ for each pair of
$m_1$ and $m_2$ messages, according to  $p(x_0|x_1,x_2)$. This
is an instance of {\em superposition broadcast} or briefly {\em
superposition} encoding, as it is similar to the superposition
codebook construction for the degraded broadcast channel
\cite[Chapter~14]{cover_elements}.


\subsection{Decoding}
Decoding at the relay and at the destination is performed using
joint typicality test. In block $t$, the first relay knows $m^t_1$ and
$m^t_2=P_{\B_2}(m^{t-1}_1)$. To decode $m^t_0$, the first relay
finds a codeword $\xb_0(m_0|m^t_1,m^t_2)$ that is jointly typical
with its received sequence  $\mb{y}^t_1$ given $\xb_2(m^t_2)$
and $\xb_1(m^t_1|m^t_2)$. The probability that given
$\xb_2(m^t_2)$ and $\xb_1(m^t_1|m^t_2)$, an incorrect codeword
$\xb_0$, independent of $\mb{y}^t_1$, is jointly typical with
$\mb{y}^t_1$ is asymptotically bounded by
$2^{-nI(X_0;Y_1|X_1,X_2)}$
\cite[Theorem~15.2.3]{cover_elements}. Since there are
$2^{nR_0}$ possibilities for $m_0$, the decoding at the first relay
is successful asymptotically with zero error probability if
\begin{equation}\label{eq:2relays:tx_re1:bound}
R_0\leq I(X_0;Y_1|X_1,X_2).
\end{equation}

The required constraint to ensure asymptotical zero probability
of the decoding error at the second relay is different for the two
 protocols. In the following,  each protocol is considered
 separately.\\

{\em 1) Rate Constraints for the $2^\text{nd}$ Relay in
Protocol A:} The second relay in Protocol A decodes the source
message with the help of the message of the first relay. More
specifically, in block $t$, the second relay decodes
$m^{t-1}_0$ with the help of $m^t_1=P_{\B_1}(m^{t-1}_0)$.
Assume that $m^{t-2}_0$ is successfully decoded prior to block
$t$. Knowing $m^t_2$ and $m^{t-1}_1$ in block $t$, the second
relay decodes the source message over two successive blocks by
finding a pair of messages $m_0$ and $m_1$ satisfying
$m_1=P_{\B_1}(m_0)$, such that $\xb_0(m_0|m_1,m^{t-1}_2)$ is
jointly typical with $\mb{y}_2^{t-1}$ given
$\xb_1(m_1|m^{t-1}_2)$ and $\xb_2(m^{t-1}_2)$, and
$\xb_1(m_1|m^{t}_2)$ is jointly typical with $\mb{y}_2^t$ given
$\xb_2(m^t_2)$.

The probability that an incorrect $\xb_0$ is jointly typical with
$\mb{y}^{t-1}_2$ given $\xb_1$ and $\xb_2$, is asymptotically
bounded by $2^{-nI(X_0;Y_2|X_1,X_2)}$; the probability that an
incorrect $\xb_1$ is jointly typical with $\mb{y}^t_2$ given
$\xb_2$ is asymptotically bounded by $2^{-nI(X_1;Y_2|X_2)}$. The
analysis of the probability of error closely follows the  one in
Section \ref{sec:1relay:joint_dec}. Let $\Ev_1$ denote the event
that $\xb_0$ is decoded incorrectly, and $\Ev_2$ be the event that
$\xb_1$ is decoded incorrectly. The decoding error probability is
given by $p(\Ev_1)$, which can be bounded as
$\text{Pr}(\Ev_1)=\text{Pr}\left(\Ev_1\cap(\Ev_2\cup\Ev_2^c)\right)\leq
\text{Pr}(\Ev_1\cap\Ev_2)+\text{Pr}(\Ev_1\cap\Ev_2^c)$. Now,
$\text{Pr}(\Ev_1\cap\Ev_2)$ is asymptotically bounded by
$2^{nR_0}2^{-nI(X_0;Y_2|X_1,X_2)}2^{-nI(X_1;Y_2|X_2)}$ as
there are $2^{nR_0}$ pairs of $m_0$ and $m_1$ messages in total
(since $m_1$ is a function of $m_0$). Similarly,
$\text{Pr}(\Ev_1\cap\Ev_2^c)$ is asymptotically bounded by
$2^{n(R_0-R_1)}2^{-nI(X_0;Y_2|X_1,X_2)}$, since knowing
$m_1$, there remain $2^{n(R_0-R_1)}$ choices for $m_0$. Hence,
the decoding error probability at the second relay is asymptotically
zero if
\begin{subequations}\label{eq:2relaysA:2ndre:bound}
\begin{align}
\iftwocol
R_0&\leq I(X_0;Y_2|X_1,X_2)+I(X_1;Y_2|X_2)\\
&=I(X_0,X_1;Y_2|X_2)\nonumber\\
\else
\intertext{and}
R_0&\leq I(X_0;Y_2|X_1,X_2)+I(X_1;Y_2|X_2)=I(X_0,X_1;Y_2|X_2)\\
\fi
R_0&\leq I(X_0;Y_2|X_1,X_2)+R_1
\end{align}
\end{subequations}

{\em 2) Rate Constraints for the $2^\text{nd}$  Relay in Protocol
B:} In Protocol B, the second relay only decodes the message of the
first relay, i.e., in block $t$, the second relay decodes $m^t_1$.
Decoding is performed by finding a codeword $\xb_1(m_1)$ that is
jointly typical with $\mb{y}_2^t$ given $\xb_2(m^t_2)$. The
probability that a codeword $\xb_1$, independent of $\mb{y}_2^t$,
is incorrectly decoded as the transmitted codeword by the first relay
is asymptotically bounded by $2^{-nI(X_1;Y_2|X_2)}$. Hence,
successful decoding at the second relay is possible asymptotically if
\begin{equation}\label{eq:2relaysB:2ndre:bound}
R_1\leq I(X_1;Y_2|X_2).
\end{equation}

{\em 3) Rate Constraints at the Destination:} The required rate
constraints at the destination to ensure asymptotically zero
probability of decoding error are the same for both protocols. The
decoding procedure at the destination is similar to the one at the
second relay in Protocol A. The destination decodes the source
message with the help of the messages of the first relay and the
second relay. Specifically, in block $t$, the destination decodes
$m^{t-2}_0$ with the help of $m^{t-1}_1=P_{\B_1}(m^{t-2}_0)$
and $m^{t}_2=P_{\B_2}(m^{t-1}_1)$. Decoding is performed by
jointly finding three messages $m_0$, $m_1$, and $m_2$, such that
the codeword $\xb_2(m_2)$ is jointly typical with $\yb_3^t$;
$\xb_1(m_1|m_2)$ is jointly typical with $\mb{y}_3^{t-1}$ given
$\xb_2(m_2)$; and $\xb_0(m_0|m_1,m_2)$ is jointly typical with
$\mb{y}_3^{t-2}$ given  $\xb_1(m_1|m_2)$ and $\xb_2(m_2)$.
Asymptotically for large $n$, the error probability of the joint
typicality test of $\xb_2$ and $\yb_3^t$ is  equal to
$2^{-nI(X_2;Y_3)}$; the error probability of the joint typicality test
of $\xb_1$ and $\yb_3^{t-1}$ given $\xb_2$, is
$2^{-nI(X_1;Y_3|X_2)}$; and the error probability of the joint
typicality test of $\xb_0$ and $\yb_3^{t-2}$ given $\xb_1$ and
$\xb_2$, is $2^{-nI(X_0;Y_3|X_1,X_2)}$. There are $2^{nR_0}$
valid combinations\footnote{\label{fnt:valid_m} A valid
combination corresponds to a set of values for $m_0$, $m_1$, and
$m_2$ such that $m_2=P_{\B_2}(m_1)$ and
$m_1=P_{\B_1}(m_0)$.} of $m_0$, $m_1$, and $m_2$ messages;
$2^{n(R_0-R_2)}$ combinations of $m_0$ and $m_1$ messages for
a given  $m_2$; and $2^{n(R_0-R_1)}$ choices for $m_0$ given the
two fixed bin indices $m_1$ and $m_2$ (note that $m_2$ is also
fixed when $m_1$ is fixed). Consequently, by an analysis similar to
the one for \eqref{eq:2relaysA:2ndre:bound}, the  probability of
decoding error at the destination approaches zero asymptotically if
the following constraints are satisfied:
\begin{subequations}\label{eq:2relays:dest:bound}
\begin{align}
\iftwocol
R_0&\leq I(X_0;Y_3|X_1,X_2)+I(X_1;Y_3|X_2)+I(X_2;Y_3)\\
&=I(X_0,X_1,X_2;Y_3)\nonumber\\
\else
R_0&\leq I(X_0;Y_3|X_1,X_2)+I(X_1;Y_3|X_2)+I(X_2;Y_3)=I(X_0,X_1,X_2;Y_3)\\
\fi
R_0&\leq I(X_0;Y_3|X_1,X_2)+I(X_1;Y_3|X_2)+R_2\label{eq:2relays:dest:bound:b}\\
R_0&\leq I(X_0;Y_3|X_1,X_2)+R_1 \label{eq:2relays:dest:bound:c}.
\end{align}
\end{subequations}

The following theorems summarize the achievable rates of the two
protocols for a two-relay network.

\begin{theorem}[Achievable Rate of Protocol A]
For a memoryless two-relay network defined by
$p(y_1,y_2,y_3|x_0,x_1,x_2)$, fixing any $p(x_0,x_1,x_2)$, the
source rate $R_0$ satisfying the following constraints is achievable:
\begin{subequations}\label{eq:2relays:A:rate}
\begin{align}
R_0&\leq I(X_0;Y_1|X_1,X_2)\label{eq:2relays:A:rate:a}\\
R_0&\leq I(X_0,X_1;Y_2|X_2)\label{eq:2relays:A:rate:b}\\
R_0&\leq I(X_0,X_1,X_2;Y_3)\label{eq:2relays:A:rate:c}.
\end{align}
\end{subequations}
\end{theorem}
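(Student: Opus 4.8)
\emph{Proof plan.} The plan is to assemble the single achievable region \eqref{eq:2relays:A:rate} out of the three groups of per-node rate constraints already derived above for Protocol~A, treating the binning rates $R_1$ and $R_2$ as free design parameters and then choosing them so that every constraint in which they appear on the right-hand side becomes vacuous. Concretely, first I would fix the input distribution $p(x_0,x_1,x_2)$ and run the encoding scheme of the previous subsection over $B$ transmission blocks, with $m^t_1=P_{\B_1}(m^{t-1}_0)$, $m^t_2=P_{\B_2}(m^{t-1}_1)$, and with the source messages of the first couple of blocks (and of a couple of flushing blocks at the end) fixed to pre-agreed values, so that the sliding-window decoders have a correct starting point; transmitting $B-O(1)$ fresh source messages over $B$ blocks costs only the factor $(B-O(1))/B\to 1$ in rate.

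It then suffices to show that the average probability of error over the random codebook ensemble and the random partitions $\B_1,\B_2$ tends to zero as $n\to\infty$. The overall error event is the union---over the fixed number of blocks and over the three decoding nodes---of the events already analyzed: decoding $m^t_0$ at the first relay, decoding the pair $(m^{t-1}_0,m^t_1)$ at the second relay, and decoding the triple $(m^{t-2}_0,m^{t-1}_1,m^t_2)$ at the destination. Each of these was shown to have vanishing probability provided, respectively, that $R_0\le I(X_0;Y_1|X_1,X_2)$ (from \eqref{eq:2relays:tx_re1:bound}), that both inequalities of \eqref{eq:2relaysA:2ndre:bound} hold, and that the three inequalities of \eqref{eq:2relays:dest:bound} hold; a union bound over the $O(1)$ blocks preserves smallness. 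The one subtlety here is the induction behind the recurring phrase ``assume $m^{t-2}_0$ is successfully decoded prior to block~$t$'': the initialization blocks furnish the base case, and conditioned on all earlier windows having been decoded correctly the bounds above apply verbatim to window~$t$, so the induction closes and every such ``valid assumption'' is justified.

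What remains is to eliminate $R_1$ and $R_2$. The scheme's constraints are \eqref{eq:2relays:tx_re1:bound}; the two lines of \eqref{eq:2relaysA:2ndre:bound}, the second reading $R_0\le I(X_0;Y_2|X_1,X_2)+R_1$; the three lines of \eqref{eq:2relays:dest:bound}, of which \eqref{eq:2relays:dest:bound:b} carries $R_2$ and \eqref{eq:2relays:dest:bound:c} carries $R_1$; together with the structural bounds $0\le R_2\le R_1\le R_0$ forced by the binning definition (a bin index of a rate-$R_0$ message cannot itself have rate above $R_0$, and likewise $R_2\le R_1$). Taking $R_1=R_2=R_0$---which makes $P_{\B_1}$ and $P_{\B_2}$ bijections and is precisely the regular-encoding specialization---meets the structural bounds and, since every mutual-information term is nonnegative, makes $I(X_0;Y_2|X_1,X_2)+R_1\ge R_0$, $I(X_0;Y_3|X_1,X_2)+I(X_1;Y_3|X_2)+R_2\ge R_0$, and $I(X_0;Y_3|X_1,X_2)+R_1\ge R_0$ all hold automatically. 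The surviving active constraints are then $R_0\le I(X_0;Y_1|X_1,X_2)$, the first line of \eqref{eq:2relaysA:2ndre:bound} (namely $R_0\le I(X_0,X_1;Y_2|X_2)$), and the first line of \eqref{eq:2relays:dest:bound} (namely $R_0\le I(X_0,X_1,X_2;Y_3)$)---exactly \eqref{eq:2relays:A:rate}. One should check that this degenerate binning choice does not break any counting step in the preceding error analysis; it does not, since every such step (``there remain $2^{n(R_0-R_1)}$ choices for $m_0$'', and its analogues) remains correct, the count merely collapsing to $1$. Since the ensemble-average error is small, some fixed codebook together with fixed partitions achieves small error, and an expurgation step, if desired, converts average into maximal error; this proves achievability of every $R_0$ satisfying \eqref{eq:2relays:A:rate}.

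The main obstacle is bookkeeping rather than a new idea: one must verify that the node-wise, block-wise error analyses---each already carried out above for a single generic window---glue into one finite union bound under a consistent inductive hypothesis across blocks, and that raising the binning rates all the way to $R_0$ is admissible in every counting argument. For readers who prefer not to collapse the protocol to regular encoding, the same conclusion follows by Fourier--Motzkin elimination of $R_1$ and $R_2$ from the system above: the lower bounds imposed on $R_1$ and on $R_2$ never exceed the upper bound $R_0$, because the relevant mutual-information gaps are nonnegative, so the projection onto $R_0$ produces no inequality beyond \eqref{eq:2relays:A:rate}.
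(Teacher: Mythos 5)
Your proposal is correct and follows essentially the same route as the paper: combine the per-node constraints \eqref{eq:2relays:tx_re1:bound}, \eqref{eq:2relaysA:2ndre:bound}, and \eqref{eq:2relays:dest:bound}, then discard every inequality in which $R_1$ or $R_2$ appears on the right-hand side, since those rates can be raised (up to $R_0$) without violating anything. The added detail on the block-Markov scheduling, the induction across decoding windows, and the explicit choice $R_1=R_2=R_0$ merely spells out what the paper leaves implicit from the preceding derivations.
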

\begin{proof}
The above rate is obtained by combining
(\ref{eq:2relays:tx_re1:bound}), (\ref{eq:2relaysA:2ndre:bound}),
and (\ref{eq:2relays:dest:bound}), and using the fact that
constraints involving $R_1$ and $R_2$ can be ignored, since $R_1$
and $R_2$ only appear on the right-hand side of all the inequalities
and can be increased freely. This rate is previously derived in
\cite{xie_kumar_rate} using a regular encoding approach.
\end{proof}

\begin{theorem}[Achievable Rate of Protocol B]
For a memoryless two-relay network defined by
$p(y_1,y_2,y_3|x_0,x_1,x_2)$, fixing any $p(x_0,x_1,x_2)$, the
source rate $R_0$ satisfying the following constraints is achievable:
\begin{subequations}\label{eq:2relays:B:rate}
\begin{align}
R_0&\leq I(X_0;Y_1|X_1,X_2)\label{eq:2relays:B:rate:a}\\
R_0&\leq I(X_0;Y_3|X_1,X_2)+I(X_1;Y_2|X_2)\label{eq:2relays:B:rate:b}\\
R_0&\leq I(X_0,X_1,X_2;Y_3)\label{eq:2relays:B:rate:c}.
\end{align}
\end{subequations}
\end{theorem}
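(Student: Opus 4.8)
The plan is to collect the per-node decodability constraints derived in this section and then eliminate the auxiliary rates $R_1$ and $R_2$, exactly as in the proof of the Protocol~A theorem, but now using the Protocol~B constraint \eqref{eq:2relaysB:2ndre:bound} at the second relay in place of \eqref{eq:2relaysA:2ndre:bound}. Concretely, the relevant inequalities are: the first-relay constraint \eqref{eq:2relays:tx_re1:bound}, namely $R_0\leq I(X_0;Y_1|X_1,X_2)$; the second-relay constraint \eqref{eq:2relaysB:2ndre:bound}, namely $R_1\leq I(X_1;Y_2|X_2)$; and the three destination constraints \eqref{eq:2relays:dest:bound}. I would first observe that the two constraints involving $R_2$ --- the first line of \eqref{eq:2relays:dest:bound} (which is the unconstrained-sum form $R_0\leq I(X_0,X_1,X_2;Y_3)$) and \eqref{eq:2relays:dest:bound:b} --- can be handled just as in Protocol~A: $R_2$ appears only on the right of \eqref{eq:2relays:dest:bound:b} and can be made as large as $I(X_2;Y_3)$ without violating any other inequality, at which point \eqref{eq:2relays:dest:bound:b} becomes implied by the first line of \eqref{eq:2relays:dest:bound}. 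Thus \eqref{eq:2relays:dest:bound} collapses to the two surviving constraints $R_0\leq I(X_0,X_1,X_2;Y_3)$, which is \eqref{eq:2relays:B:rate:c}, and \eqref{eq:2relays:dest:bound:c}, namely $R_0\leq I(X_0;Y_3|X_1,X_2)+R_1$.

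The one genuinely new step, and the part I expect to require the most care, is the elimination of $R_1$. Unlike in Protocol~A, here $R_1$ is bounded \emph{above} by \eqref{eq:2relaysB:2ndre:bound} because the second relay must actually decode $m_1$; so $R_1$ cannot simply be pushed to infinity. Instead $R_1$ appears on the right-hand side of \eqref{eq:2relays:dest:bound:c} (good --- we want $R_1$ large there) but is capped by $R_1\leq I(X_1;Y_2|X_2)$. The optimal choice is therefore to set $R_1$ as large as permitted, $R_1 = I(X_1;Y_2|X_2)$; substituting this into \eqref{eq:2relays:dest:bound:c} yields precisely $R_0\leq I(X_0;Y_3|X_1,X_2)+I(X_1;Y_2|X_2)$, which is \eqref{eq:2relays:B:rate:b}. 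I should check that this choice of $R_1$ does not conflict with any other requirement: $R_1$ appears nowhere else as an upper-bounding quantity except in \eqref{eq:2relays:dest:bound:b}, whose right side only grows with $R_1$, so the substitution is consistent. One must also confirm the codebook sizes are sensible, i.e. that $R_1 \geq 0$ and that a binning function into $2^{nR_1}$ bins of $\{1,\dots,2^{nR_0}\}$ makes sense, which it does as long as $R_1 \leq R_0$; if $I(X_1;Y_2|X_2) > R_0$ one simply truncates $R_1$ to $R_0$, and then \eqref{eq:2relays:dest:bound:c} already subsumes \eqref{eq:2relays:B:rate:a} so nothing is lost.

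Finally I would assemble the three surviving constraints \eqref{eq:2relays:B:rate:a}, \eqref{eq:2relays:B:rate:b}, \eqref{eq:2relays:B:rate:c} and note that the standard block-Markov argument closes the loop: with the blockwise encoding and the joint-typicality decoding rules described above, the error events analyzed at the first relay, the second relay, and the destination each have vanishing probability under these constraints, and the induction hypothesis that the previous messages were decoded correctly (used throughout the decoding analysis) is thereby validated, so over $B$ blocks the overall rate $\frac{B-2}{B}R_0 \to R_0$ is achievable. Putting these pieces together gives the claimed achievable region, completing the proof.
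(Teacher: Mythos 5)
Your proposal is correct and follows essentially the same route as the paper's proof: combine the first-relay constraint \eqref{eq:2relays:tx_re1:bound}, the Protocol~B second-relay constraint \eqref{eq:2relaysB:2ndre:bound}, and the destination constraints \eqref{eq:2relays:dest:bound}, obtaining \eqref{eq:2relays:B:rate:b} by substituting the cap $R_1\leq I(X_1;Y_2|X_2)$ into \eqref{eq:2relays:dest:bound:c} and discarding \eqref{eq:2relays:dest:bound:b} since $R_2$ is unconstrained. Your additional remarks on the consistency of the choice of $R_1$ and the block-Markov wrap-up are correct elaborations of details the paper leaves implicit.
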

\begin{proof}
The  above rate constraints are derived from
(\ref{eq:2relays:tx_re1:bound}),
(\ref{eq:2relaysB:2ndre:bound}), (\ref{eq:2relays:dest:bound}).
Note that (\ref{eq:2relays:B:rate:b}) is a consequence of
(\ref{eq:2relays:dest:bound:c}) and
(\ref{eq:2relaysB:2ndre:bound}). Also,
(\ref{eq:2relays:dest:bound:b}) can be ignored, since $R_2$
appears only on the right-hand side  and can be freely
increased.
\end{proof}

\subsection{Protocol A versus Protocol B}
The rate achieved by Protocol A can be higher or lower than the
rate achieved by Protocol B depending on channel parameters.
Protocol A can be capacity achieving if the network is degraded
in a particular way. Protocol B can  also be capacity
achieving, however, under a different degradedness condition.


The cut-set bound can be used to identify networks for which
Protocols A and B are capacity achieving. According to the
cut-set bound for a two-relay network, the source  rate $R_0$
 satisfies the following inequalities
 for some joint distribution $p(x_0,x_1,x_2)$ \cite[Chapter~14]{cover_elements}:
\begin{subequations}\label{eq:cut-set}
\begin{eqnarray}
R_0&\leq& I(X_0;Y_1,Y_2,Y_3|X_1,X_2)\label{eq:cut-set:1}\\
R_0&\leq& I(X_0,X_1;Y_2,Y_3|X_2)\label{eq:cut-set:2}\\
R_0&\leq& I(X_0,X_1,X_2;Y_3)\label{eq:cut-set:3}.
\end{eqnarray}
\end{subequations}

Protocol A is capacity achieving for a two-relay network in which
$X_0-(X_1,X_2,Y_1)-(Y_2,Y_3)$ and $(X_0,X_1)-(X_2,Y_2)-Y_3$
form Markov chains \cite{xie_kumar_rate}. For such a network, the
rate (\ref{eq:2relays:B:rate}) is equivalent to (\ref{eq:cut-set}),
since $I(X_0;Y_1,Y_2,Y_3|X_1,X_2)=I(X_0;Y_1|X_1,X_2)$ if
$X_0-(X_1,X_2,Y_1)-(Y_2,Y_3)$ is a Markov chain, and
$I(X_0,X_1;Y_2,Y_3|X_2)=I(X_0,X_1;Y_2|X_2)$ if
$(X_0,X_1)-(X_2,Y_2)-Y_3$ is a Markov chain.

On the other hand, Protocol B achieves the capacity of a two-relay
network in which $X_0-(X_1,X_2,Y_1)-(Y_2,Y_3)$,
$X_1-(X_2,Y_2)-Y_3$, and $X_0-(X_1,X_2,Y_3)-Y_2$ form Markov
chains. For this channel, the equivalence of (\ref{eq:cut-set:1}) and
(\ref{eq:2relays:B:rate:a}) is a direct consequence of
$X_0-(X_1,X_2,Y_1)-(Y_2,Y_3)$. It remains to show that
(\ref{eq:cut-set:2}) reduces to (\ref{eq:2relays:B:rate:b}), which
can be proved as follows: \iftwocol
\begin{align}
&I(X_0,X_1;Y_2,Y_3|X_2)\nonumber\\
&\overset{(a)}=I(X_1;Y_2,Y_3|X_2)+I(X_0;Y_2,Y_3|X_1,X_2)\nonumber\\
&\overset{(b)}=I(X_1;Y_2|X_2)+I(X_0;Y_2,Y_3|X_1,X_2)\nonumber\\
&\overset{(c)}{=}I(X_1;Y_2|X_2)+I(X_0;Y_3|X_1,X_2)\nonumber
\end{align}
\else
\begin{align}
I(X_0,X_1;Y_2,Y_3|X_2)&\overset{(a)}=I(X_1;Y_2,Y_3|X_2)+I(X_0;Y_2,Y_3|X_1,X_2)\nonumber\\
&\overset{(b)}=I(X_1;Y_2|X_2)+I(X_0;Y_2,Y_3|X_1,X_2)\nonumber\\
&\overset{(c)}{=}I(X_1;Y_2|X_2)+I(X_0;Y_3|X_1,X_2)\nonumber
\end{align}
\fi where (a) follows from the chain rule for the mutual information,
(b) follows from $X_1-(X_2,Y_2)-Y_3$, and (c) follows from
$X_0-(X_1,X_2,Y_3)-Y_2$.  For future reference, we call this type of
degraded network {\em doubly degraded}.

Doubly degraded network corresponds to a network in which the
channel from the source to the second relay is blocked. The
interpretation of the first Markov chain
$X_0-(X_1,X_2,Y_1)-(Y_2,Y_3)$ is that the channel from the source
to the first relay is stronger than the channel from the source to the
second relay and the destination. The second Markov chain
$X_1-(X_2,Y_2)-Y_3$ states that  the channel from the first relay to
the second relay  is stronger than the channel from the first relay to
the destination. The last Markov chain $X_0-(X_1,X_2,Y_3)-Y_2$
implies that the channel from the source to the second relay is
weaker than the channel from the source to the destination. The
next example describes a
Gaussian version of such a doubly degraded network.\\



\EX Consider the two-relay network depicted in Fig.\
\ref{fig:cross_degraded}. In this network, the source signal is
represented by $X_0$. The first relay receives $Y_1=X_0+N_1$
where $N_1\sim\mathcal{N}(0,\sigma_1^2)$  and transmits $X_1$.
The channel from the source to the second relay is blocked. The
second relay receives $Y_2=X_1+N_2$, and transmits $X_2$, where
$N_2\sim\mathcal{N}(0,\sigma^2_2)$. Destination receives
$Y_3=X_0+X_1+X_2+N_1+N_2+N_3$, where
$N_3\sim\mathcal{N}(0,\sigma_3^2)$.

It can be shown that a joint Gaussian distribution is the optimal
input distribution for such a degraded additive Gaussian noise
network. This is proved for the degraded single-relay channel with
additive Gaussian noise in \cite[Section IV]{cover_elgamal}. The
same technique is applicable to  the multirelay case. We skip the
proof and assume the optimality of the jointly Gaussian input in the
following.

The achievable rate of Protocol B is derived by defining the joint
distribution of $(X_0,X_1,X_2)$ as follows. Let
$X_0=X'_0+\alpha_{11}X'_1+\alpha_{12}X'_2$ where
$X'_0\sim\mathcal{N}(0,Q_0)$, $X'_1\sim\mathcal{N}(0,Q_1)$, and
$X'_2\sim\mathcal{N}(0,Q_2)$ are independent Gaussian random
variables. For the first relay, let $X_1=X'_1+\alpha_{22}X'_2$, and
for the second relay, set $X_2=X_2'$. The correlations among the
three input signals are controlled by $\alpha_{ij}, 1\leq i\leq j\leq
2$.

\begin{figure}
\begin{center}
\iftwocol
\includegraphics[scale=0.7]{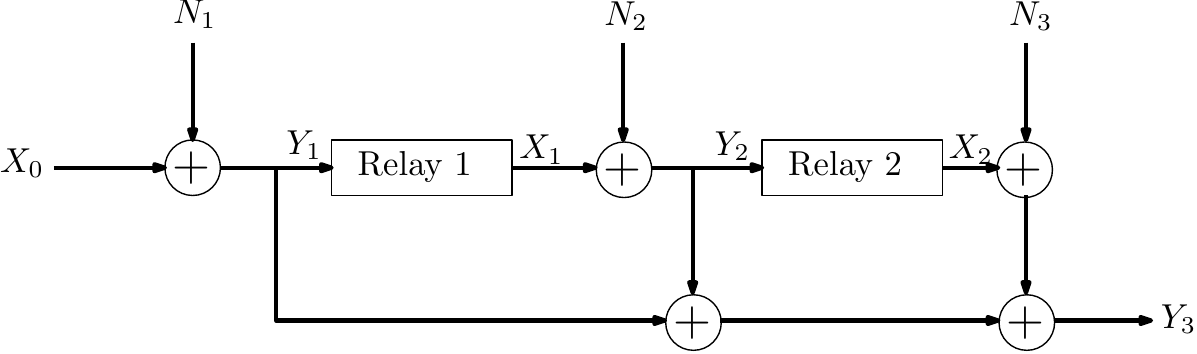}
\else
\includegraphics[scale=1]{cross_degraded.pdf}
\fi
\caption{A Gaussian doubly degraded
two-relay network.\label{fig:cross_degraded}}
\end{center}
\end{figure}

The network is doubly degraded, since the source can only
communicate to the destination through the first relay, the first
relay can only communicate to the destination through the second
relay, and  the source cannot  directly communicate to the second
relay.



The achievable rate of Protocol A under power constraints
$\mathbb{E}{X^2_i}\leq P_i, 0\leq i\leq 2$ is given by
\begin{subequations}\label{eq:2r_gaussian_rate}
\begin{align}
R_0&< \frac{1}{2}\log_2\left(1+\frac{Q_0}{\sigma^2_1}\right)\\
R_0&<\frac{1}{2}\log_2\left(1+\frac{Q_1}{\sigma^2_2}\right)\\
R_0&<
\frac{1}{2}\log_2\left(1+\frac{Q_0+(1+\alpha_{11})^2Q_1+(1+\alpha_{12}+\alpha_{22})^2Q_2}{\sigma^2_1+\sigma^2_2+\sigma^2_3}\right).
\end{align}
\end{subequations}
maximized over $\alpha_{ij}, 1\leq i\leq j\leq 2$, $Q_0$, $Q_1$,
and $Q_2$ such that
$Q_0+\alpha_{11}^2Q_1+\alpha_{12}^2Q_2\leq
P_0,Q_1+\alpha_{22}^2Q_2\leq P_1$, and $Q_2\leq P_2$. This rate
is also equal to the multihop rate of \cite{xie_kumar_rate} for this
network.

The  power-constrained achievable rate of Protocol B given in
(\ref{eq:2relays:B:rate}) translates into the following constraints:
\begin{subequations}
\begin{align}
R_0&< \frac{1}{2}\log_2\left(1+\frac{Q_0}{\sigma^2_1}\right)\nonumber\\
R_0&<\frac{1}{2}\log_2\left(1+\frac{Q_0}{\sigma^2_1+\sigma^2_2+\sigma^2_3}\right)+\frac{1}{2}\log_2\left(1+\frac{Q_1}{\sigma^2_2}\right)\nonumber\\
R_0&<
\frac{1}{2}\log_2\left(1+\frac{Q_0+(1+\alpha_{11})^2Q_1+(1+\alpha_{12}+\alpha_{22})^2Q_2}{\sigma^2_1+\sigma^2_2+\sigma^2_3}\right)\nonumber,
\end{align}
\end{subequations}
maximized over $\alpha_{ij}, 1\leq i\leq j\leq 2$, $Q_0$, $Q_1$,
and $Q_2$ such that
$Q_0+\alpha_{11}^2Q_1+\alpha_{12}^2Q_2\leq
P_0,Q_1+\alpha_{22}^2Q_2\leq P_1$, and $Q_2\leq P_2$. This
rate, which is also the capacity of this doubly degraded network, is
strictly greater than the multihop rate (\ref{eq:2r_gaussian_rate}).
As a comparison of Protocols A and B, note that in the case that the
optimal values of $Q_0, Q_1,\sigma^2_1$, and $\sigma^2_2$ are
such that $Q_1/\sigma^2_2<Q_0/\sigma^2_1$ (e.g.,
$\sigma^2_1\ll\sigma_2^2$), it is better to turn the second relay
off rather than forcing it to decode the source message in Protocol
A; however, the second relay can still help the overall transmission
 using Protocol B. \EEX

\section{ Multirelay Parity Forwarding \label{sec:multirelay}}
The parity forwarding protocol can be generalized to multirelay
networks by allowing relay terminals   to transmit
 messages that are bin indices of the messages of the source or other relays.
The first step in such a structured generalization is to specify the
relation among messages.

The relation among messages in a parity forwarding protocol is
transitive and can take various forms. For example, a relay
message A can be a random bin index of another relay message B,
while message B itself is a bin index of another message C.
Thus, message C is also a bin index for message A. Another
possibility could be that messages B and C are independent
random bin indices of message A. In this paper, we propose the
use of a message tree to describe the relation among messages.

Consider an example of a  parity forwarding protocol in a network
with four relays shown in Fig.~\ref{fig:4relays_net}.
Fig.~\ref{fig:message_sets} shows a message tree that specifies
one possible relationship between messages and their parities.   In
the message tree, a node represents  a message. The root node
represents the source message $m_0$. Branching out from a
message  corresponds to forming a parity for that message, i.e.,
over each branch,    the child message is a parity (random bin
index) for the parent message.

There are several possibilities for associating messages in the tree
with transmitting nodes. The message sets
$\A_0,\A_1,\A_2,\A_3,\A_4$ in Fig.~\ref{fig:4relays_net} describe
one such association for the message tree in
Fig.~\ref{fig:message_sets}. In general, the message sets partition
the set of all messages in the tree and each partition is associated
with a transmitter (source or relay). Several partition schemes may
exist for a message tree, resulting in   different parity forwarding
protocols with different rates.

A relay node computes the messages associated with it by
decoding a set of messages from other nodes. For example, all
parity messages are computable if the relay decodes the source
message $m_0$. However, decoding the source message is not the
only possibility. For example, the second relay in
Fig.~\ref{fig:message_sets}   only needs
  to decode the message $m_{11}$ in order to compute
$\A_2=\{m_{21}\}$. The  set of messages that a relay decodes is
called the decoding message set. For the second relay in this
example, the decoding set can be  $\D_2=\{m_0\}$, or
$\D_2=\{m_{11}\}$, or $\D_2=\{m_{11},m_{12}\}$, because  the
message set of the second relay $\A_2=\{m_{21}\}$ is directly a
parity of $m_{11}$ and indirectly a parity of $m_0$. For the
destination, the decoding set is always the set $\{m_0\}$.
Different choices of the decoding sets result in different rates
corresponding to different parity forwarding protocols. For example,
for the network with two relays in the previous section, different DF
rates are achieved depending on which message the second relay
decodes.

In summary, a parity forwarding protocol for a multirelay network is
characterized by three components:
\begin{itemize}
\item A message tree which specifies a set of messages and
    their     relation with respect to each other;
\item The message sets which    associate messages in the
    message tree with the transmitters     (source or the
    relays); and
 \item  The decoding sets which specify the messages that the
     relays decode.
 \end{itemize}
\begin{figure}
\centering
\iftwocol
\includegraphics[scale=0.6]{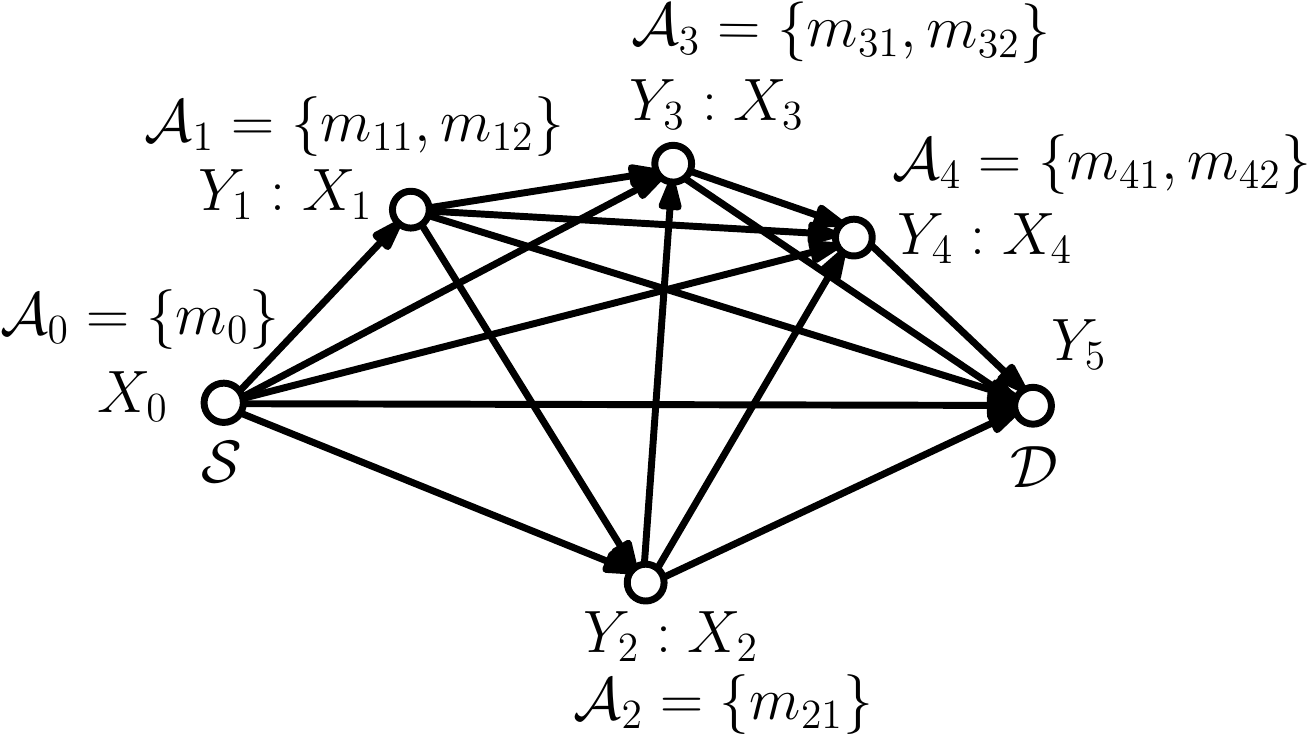}
\else
\includegraphics[scale=0.7]{net_4relays.pdf}
\fi
\caption{A network with four relays. The message sets $\A_0$ up to
$\A_4$ specify the messages each relay encodes.
\label{fig:4relays_net}.}
\end{figure}

\begin{figure}
\centering
\iftwocol
\includegraphics[scale=0.5]{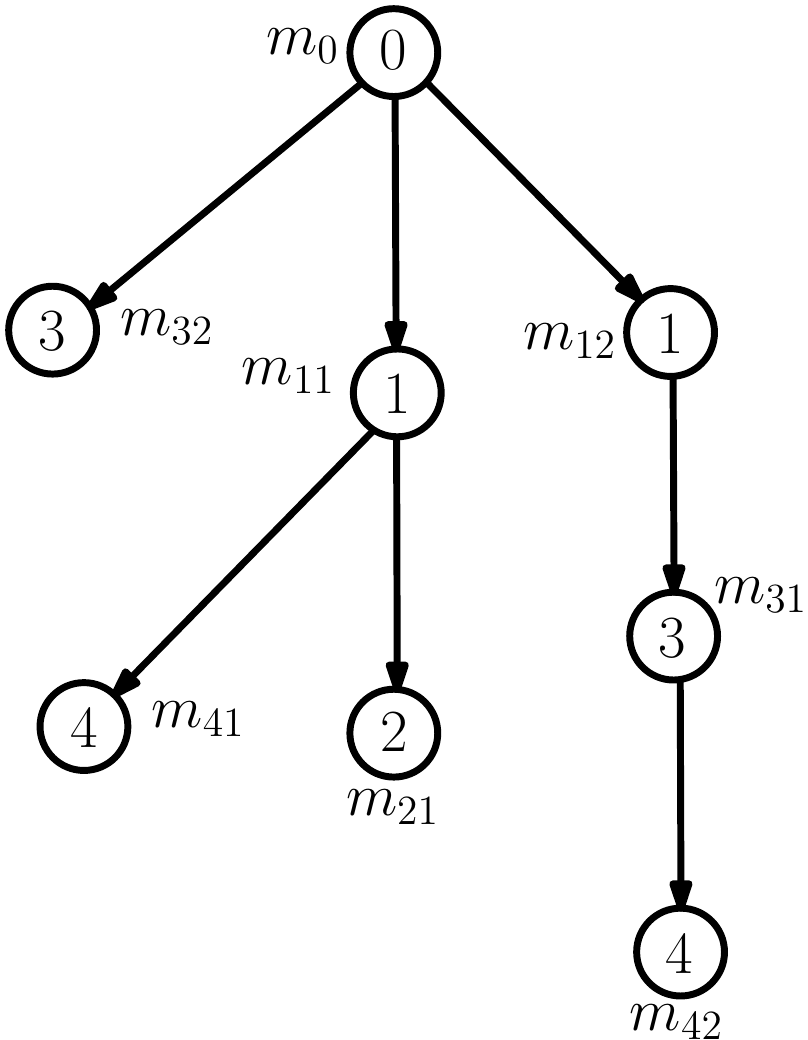}
\else
\includegraphics[scale=0.7]{msg_tree_4relays.pdf}
\fi
\caption{An example of message tree and message sets for a
network with four relays shown in
Fig.~\ref{fig:4relays_net}.\label{fig:message_sets}}
\end{figure}

\subsection{Mathematical Formulation}
Formally, consider a general multirelay network consisting of a pair
of  source and destination and $K$ relays  numbered from 1 to $K$.
Let the set $\A_0=\{m_{01},m_{02},\ldots,m_{0n_0}\}$ denote
the set of messages encoded by the source, where $m_{0n_0}$ is
the source message and the rest of messages are parities for
$m_{0n_0}$. For notational simplicity, let the subscript ``$0n_0$''
be equivalent to ``$0$''  so that, for example, $m_0=m_{0n_0}$.
In each block, the $k$th relay, $1\leq k\leq K$, transmits a set of
$n_k>0$ parity messages $\A_k=\{m_{k1}$, $m_{k2}$,$\ldots$ ,
$m_{kn_k}\}$. The message tree, defined in the following,
specifies the relation among the messages.

\begin{definition}[Message Tree]
The message tree is defined by a directed tree
${\tau}=(\M,\mathcal{V})$ where $\M$ denotes the set of nodes
and $\mathcal{V}$ denotes the set of directed edges. Each node
in the message tree represents a message. The source message
$m_0$ is associated with the root node. All other nodes
correspond to parity messages sent by the relay terminals.  An
edge corresponds to forming a bin index. Branching out from the
message $m_{k'l'}$ to the message $m_{kl}$ corresponds to
forming the bin index $m_{kl}$ for the message $m_{k'l'}$ with
respect to a random  partition $\B_{kl}$ of the message space
$\{1,2,\cdots,2^{nR_{k'l'}}\}$ into $2^{nR_{kl}}$ bins, i.e.,
$m_{kl}=P_{\B_{kl}}(m_{k'l'})$. The random partition sets
$\B_{kl}$, $1\leq k\leq K, 1\leq l\leq n_k$ are generated
independently.\end{definition}

\begin{notation}[$\ra$, $\nra$]
We use $m\ra m'$ to denote that the message $m'$ is a descendent
of the message $m$. The message $m'$ is a descendent of  $m$ if
there is a path from the message $m$ to the message $m'$ in the
message tree, i.e., $m'$ is a parity message for $m$, either directly
or indirectly via other intermediate parity messages. A message is
considered to be a descendent (parity) of itself, i.e., $\forall
m\in\M, m\ra m$. We write $m\nra m'$ if $m\ra m'$ is not true. We
write $\mathcal{F}\ra m$ to denote that the message $m$ is a
descendent of the set $\mathcal{F}\subset \M$. The message $m$
is a descendent of ${\cal F}$ if $\exists f\in\mathcal{F}$ such that
$f\ra m$. We write $\mathcal{F}\ra\mathcal{G}$ to denote that the
set $\mathcal{G}\subset \M$ is a descendent of $\mathcal{F}$. The
set $\mathcal{G}$ is a descendent of $\mathcal{F}$ if
$\mathcal{F}\ra g, \forall g\in \mathcal{G}$; in this case,
$\mathcal{F}$ is called a generator for $\mathcal{G}$. The
``$\nra$'' relation for sets of messages denotes the negation of
``$\ra$''.
\end{notation}

The collection of the message sets $\A_k$'s forms a disjoint
partition of $\M$. The $\A_k$ sets are such that for $i<j$, there are
no messages in $\A_i$ that are a descendent of $\A_j$. This is
required since messages in $\A_k$ are constructed by forming
parities for messages in $\A_0$, $\A_1$, $\ldots$, $\A_{k-1}$. A
message is said to be of order $k$ if it is transmitted by the $k$th
relay. Note that a message and its parity may be transmitted by the
same relay; this allows for partial decode-and-forward schemes of
the type introduced in \cite[Theorem~7]{cover_elgamal} (see the
example in Section \ref{sec:ex:aref}).

Having specified the relation of messages with each other, we now
need to specify in which time slot a message is transmitted. In
block $t$, the $k$th relay encodes an instance of $\A_k$ denoted by
$\A^t_k=\{m^t_{k1},m^t_{k2},\cdots,m^t_{kn_k}\}$. The
messages sent in block $t$ are parities of the messages sent in
blocks $t-k$ to $t-1$. More precisely, if $m_{ki}$ is a child (direct
parity) of $m_{lj}$ in the message tree, then in block $t$ we have:
\begin{equation}\label{eq:timing}
m^t_{ki}=P_{\B_{ki}}(m^{t-(k-l)}_{lj}).
\end{equation}

The set of all messages decoded by the $k$th relay is specified by
the set $\D_k$. The message set $\A_k$ is a descendent of $\D_k$.
By decoding the messages in $\D_k$, the $k$th relay knows all
messages that are directly or indirectly parities of messages in
$\D_k$, some of which are assigned to other relay terminals to be
transmitted in subsequent blocks.  For optimal encoding, the $k$th
relay should cooperate with other relay terminals to transmit
messages known to the $k$th relay. The $k$th relay utilizes
superposition broadcast encoding to encode multiple messages,
which imposes limitations on the choice of $\D_k$ sets.

The $\D_k$ sets should have the following properties. First,
the set $\A_k$ can be generated from $\D_k$,  i.e.,
$\D_k\ra\A_k$, while $\D_k\cap\A_k=\{\}$. Second, a message and
its parity cannot be both in $\D_k$, since the parity can be
computed and thus removed from $\D_k$. Third, in order for the
sets $\D_k$ to be consistent with the encoding method described
in the next section, the $\D_k$ sets should be such that if a
message $m_{li}$ belongs to the set $\D_k$,  then all messages
of the same order $l$ and with a smaller second subscript
should be a descendent of $\D_k$, i.e. $\forall i'\leq
i:\D_k\ra m_{li'}$ (note that if $m_{rs}\in \D_k$ then $\D_k\ra
m_{rs}$). For the destination, we have $\D_{K+1}=\{m_0\}$. It
should be noted that a collection of $\D_k$ sets satisfying
these features always exists, since $\D_k=\{m_0\}$, $1\leq
k\leq K+1$, satisfies the above properties. (If $\D_k=\{m_0\}$,
$1\leq k\leq K+1$, then all the relays decode the source
message. This choice of decoding sets gives the multihop DF
rate, irrespective of the underlying message tree.)

\begin{figure}
\centering
\iftwocol
\includegraphics[scale=0.5]{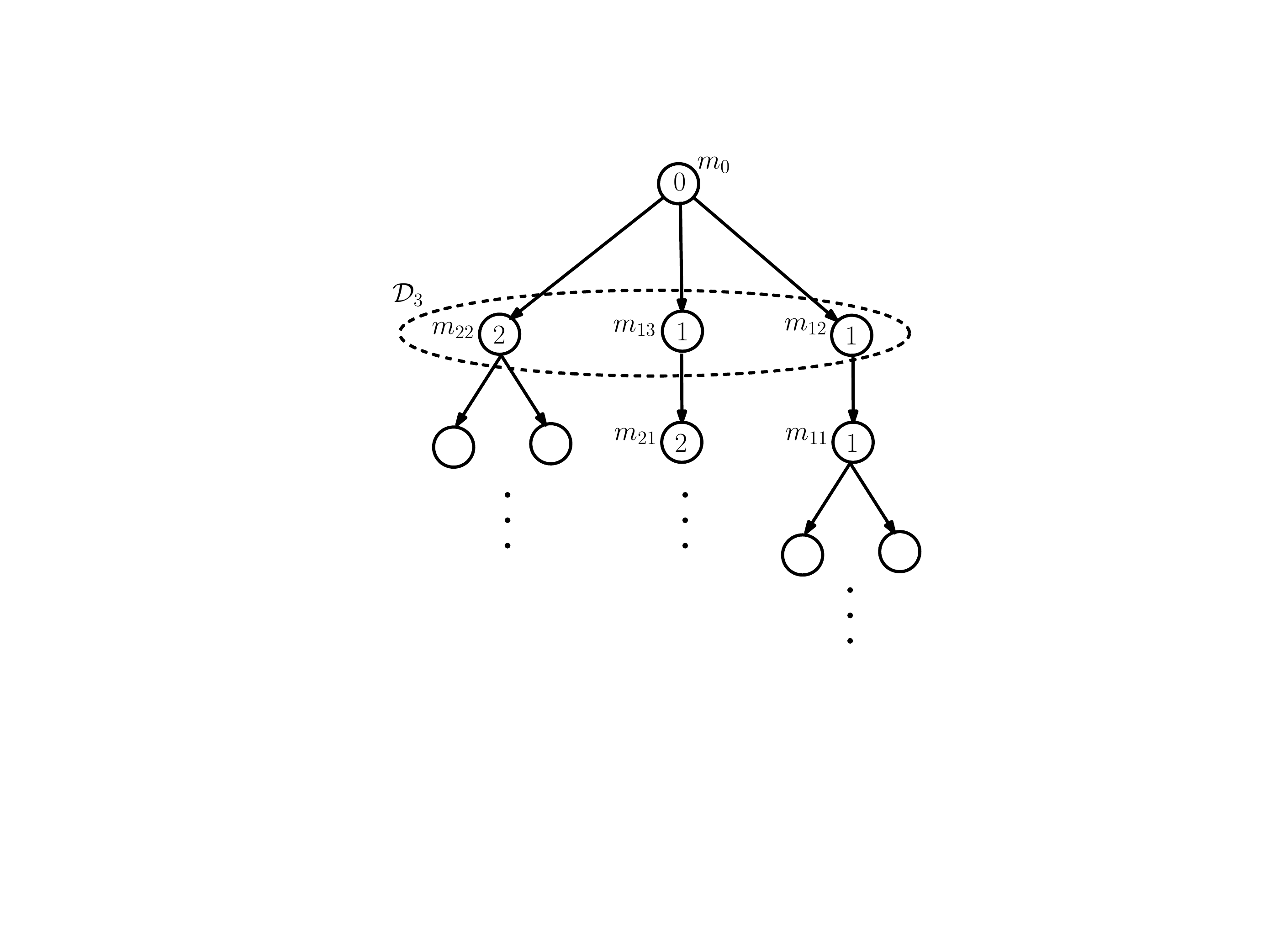}
\else
\includegraphics[scale=0.7]{why_D.pdf}
\fi
\caption{Illustration of the properties of decoding sets.\label{fig:why_D}}
\end{figure}

To elaborate on the properties of $\D_k$ sets, consider the
message tree in Fig.~\ref{fig:why_D}. In this example,
$\D_3=\{m_{12},m_{13},m_{22}\}$ is consistent with all
properties specified above. However,
$\D_3=\{m_{11},m_{13},m_{22}\}$ is invalid, because it violates
the third property. The reason is that
 $m_{13}$ is encoded on top of $m_{11}$ and
$m_{12}$ in the superposition encoding procedure described in the
next section. Hence, decoding $m_{13}$ requires the decoding of
$m_{11}$ and in particular $m_{12}$. But, $m_{12}\notin \D_3$
and $\D_3\nra m_{12}$. Thus, $\D_3=\{m_{11},m_{13},m_{22}\}$
does not completely specify all the messages decoded, violating the
definition of the decoding set.

In summary, a parity forwarding protocol for a network with $K$
relays is defined by a three tuple $(\mathcal{\tau},\A,\D)$
consisting of the message tree $\mathcal{\tau}$, defining the
relation between messages and parities; a partition
$\A=\{\A_0,\A_1,\ldots,\A_{K}\}$ which assigns the messages to
different relay terminals; and the set of decoding sets
$\D=\{\D_1,\D_2,\ldots,\D_{K+1}\}$, which determines the set of
messages each relay should decode.


\subsection{Encoding\label{sec:multirelay:encoding}}
Messages are encoded using superposition coding. Consider as an
example, the encoding of three messages
$m_a\in\{1,2,\ldots,2^{nR_a}\}$,
$m_b\in\{1,2,\ldots,2^{nR_b}\}$,
$m_c\in\{1,2,\ldots,2^{nR_c}\}$. First, $2^{nR_c}$ codewords
$\xb_c(m_c)$ are randomly generated according to  $p(x_c)$. Next,
$2^{nR_b}$ codewords $\xb_b(m_b|m_c)$ are generated for each
$\xb_c$ codeword according to  $p(x_b|x_c)$. We call the set
$\{m_c\}$ the known message set and the set
$\{\xb_c(m_c)|m_c=1,2,\ldots,2^{nR_c}\}$ the known codeword set
for $m_b$, as the $\xb_b$ codewords are generated for fixed known
values of $\xb_c(m_c)$. Similarly,  $\xb_a$ codewords are randomly
generated for every pair of $m_b$ and $m_c$ messages by randomly
choosing a codeword $\xb_c(m_a|m_b,m_c)$ according to
$p(x_a|x_b,x_c)$ conditioned upon fixed $\xb_b(m_b|m_c)$ and
$\xb_c(m_c)$ codewords. Accordingly, the known message sets and
codeword sets for $m_a$ are  $\{m_b,m_c\}$ and
$\{\left(\xb_b(m_b|m_c\right),\xb_c(m_c))|m_b=1,\ldots,2^{nR_b},m_c=1,\ldots,2^{nR_c}\}$,
respectively. The codeword $\xb_a(m_a|m_b,m_c)$ encodes the
three messages $m_a$, $m_b$, and $m_c$.

In the parity forwarding protocol, messages are encoded following
the same procedure as in the above example using superposition
coding. The key is to identify the known message sets for each
message $m_{kl}$ in the message tree to determine the set of all
messages on top of which $m_{kl}$ should be encoded.

\begin{definition}[Known Sets]\label{def:known_sets}
The message $m_{ki}$ is superposed onto its known message sets
$\C^m_{ki}$. First, within each $\A_k$, messages are superposed
onto each other in the order of their second subscripts, i.e., at the
$k$th relay terminal, the message $m_{ki}$ is encoded on top of
all messages $m_{kj},j<i$. Furthermore, in each block, the $k$th
relay knows all messages of  an order higher than $k$ which are
descendants of $\D_k$, i.e., $\{m_{k'i'}\in \M|k'>k,\D_k\ra
m_{k'i'}\}$.  Hence, the {\em known message set} of $m_{ki}$ is
$\C^m_{ki}=\{m_{kj},j<i\}\cup\{m_{k'i'}\in \M|k'>k,\D_k\ra
m_{k'l'}\}$. Further, we use $\C^m_{ki}(t)$ to denote the instance
of $\C^m_{ki}$ messages in the $t$th block.
\end{definition}

As an example of known sets consider Fig.~\ref{fig:known_sets}
corresponding to the four-relay network in
Fig.~\ref{fig:4relays_net}. Let $\D_2=\{m_{11},m_{12}\}$ be the
decoding set for the message $m_{21}$. Then, the known set
$\C^{m}_{21}$ for $m_{12}$ consists of all descendants of $\D_2$
with orders higher than 2, the order of $m_{21}$.

The next step in the generation of random codebooks is to assign a
probability distribution and a random variable to each message in
the tree. Let $X_{ki}$ represent the  random variable corresponding
to the encoding of $m_{ki}$.  The set $\C^x_{ki}$ is defined as
the set of random variables corresponding to messages in
$\C^m_{ki}$, i.e., $\C^x_{ki}=\{X_{ki}|m_{ki}\in\C^m_{ki}\}$.
Let $p(x_{ki}|\C^x_{ki})$ be the conditional probability
distribution associated with $X_{ki}$. Note that by properties of
the decoding sets and known sets,
$\prod_{k=0}^K\prod_{i=1}^{n_k}p(x_{ki}|\C^x_{ki})$ is a valid
joint probability distribution.

\iftwocol
\begin{figure}
  \begin{center}
\iftwocol
      \scalebox{0.5}{\includegraphics{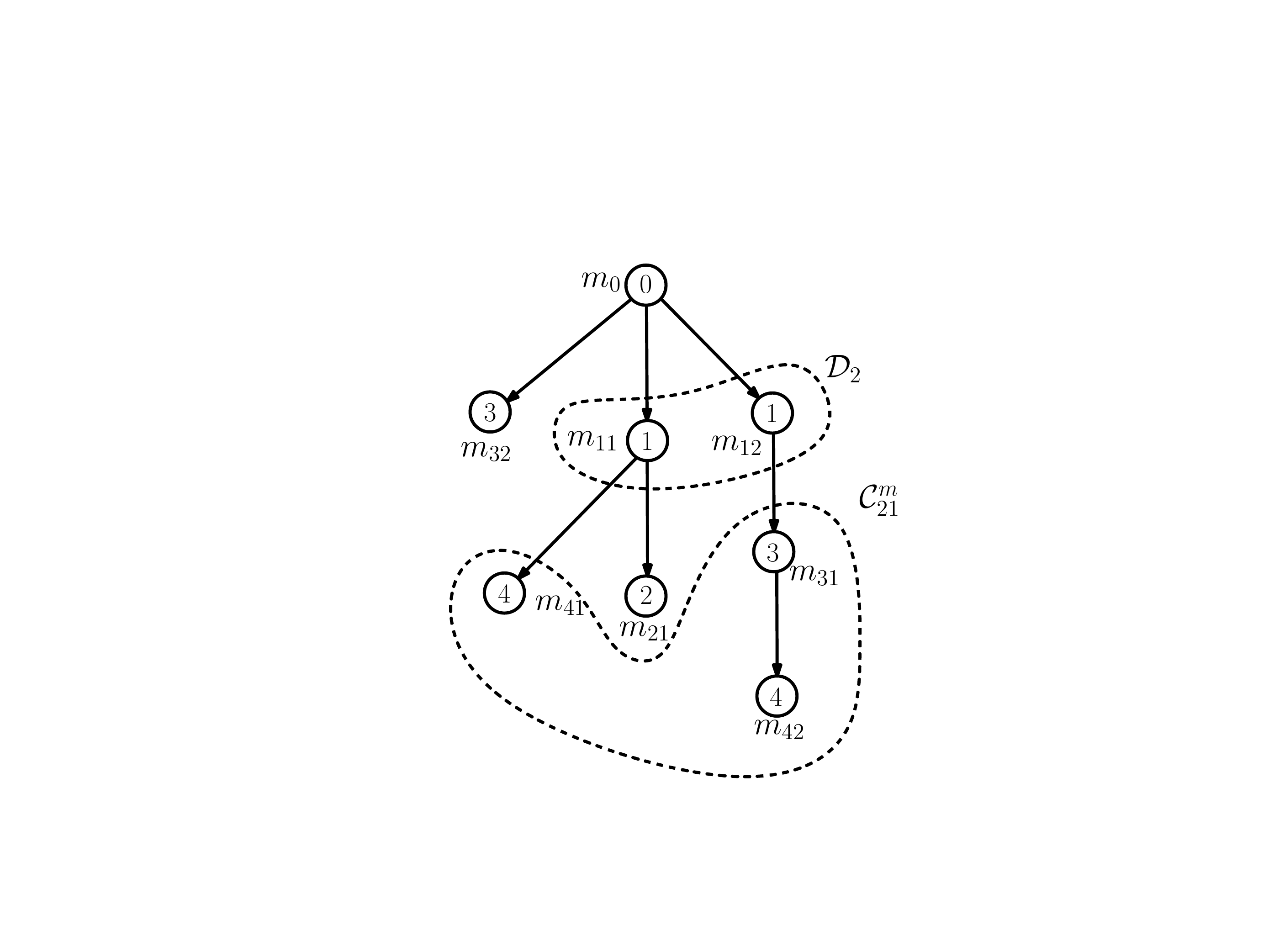}}
      \else
      \scalebox{1}{\includegraphics{k_sets6.pdf}}
      \fi
    \caption{\label{fig:known_sets} Examples of known sets for the four-relay network in Fig.~\ref{fig:message_sets}.}
  \end{center}
\end{figure}

Random codebook construction starts with messages of the $K$th
relay that have empty known sets. For every message $m_{ki},1\leq
k\leq K, 1\leq i\leq n_i$ with $\C^m_{ki}=\{\}$, $2^{nR_{ki}}$
codewords $\mb{x}_{ki}(m_{ki})$ are randomly generated
according to the probability distribution $p(x_{ki})$. In the next
step, for every message $m_{k'i'}$ for which the codewords for all
messages in the corresponding $\C^m_{k'i'}$ have already been
constructed in previous steps, $2^{nR_{k'i'}}$ random codewords
$\xb_{k'i'}(m_{k'i'}|C^m_{k'i'})$ are generated for every
combination of codewords in $\C^m_{k'i'}$ according to
$p(x_{k'i'}|\C^x_{k'i'})$. This procedure is repeated until random
codebooks are generated for all messages. In block $t$, the $k$th
terminal,  $0\leq k\leq K$, transmits
$\mb{x}_{kn_k}\left(m^t_{kn_k}|\C^m_{kn_k}(t)\right)$. This is
equivalent to having $X_{kn_k}$ as the channel input by the $k$th
relay. For notational simplicity, let $X_k\triangleq X_{kn_k},0\leq
k\leq K,$ so that the channel input by the $k$th relay is
represented by $X_k$.

\subsection{Decoding}
Messages are decoded at each relay node and at the destination via
joint decoding. A set of messages and their parity messages are
jointly decoded by finding a combination of messages consistent
with the parity relationship between messages, such that the
corresponding codewords are jointly typical with the respective
received sequences.

At the $k$th decoding node (a relay or the destination), decoding is
performed over a window of successive blocks. To identify the
decoding window of received sequences at the $k$th relay, let $q$
be the smallest order of the relays of which a message is decoded
by the $k$th relay, i.e., $q$ is the smallest number such that
$\exists l: m_{ql}\in \D_k$. Let
 $\mb{y}^t_1, \mb{y}^t_2,
\ldots,\mb{y}^t_{K+1}$ denote the random sequences representing
the  received sequences in block $t$ at the first relay, the second
relay,  up to the destination, respectively. Then, according to
(\ref{eq:timing}), the decoding window for the $k$th relay in block
$t$ is given by $(\mb{y}^{t-(k-q)+1}_k,\ldots,\mb{y}^{t}_k)$.

To identify all parity messages for the messages in $\D_k$
available to the $k$th relay terminal, note that any descendent
message of $\D_k$ sent in or before block $t$ can be used in block
$t$ as a parity for the messages in $\D_k$. Note that according to
(\ref{eq:timing}), a parity message $m_{rs}$ for $\D_k$ with
$r\geq k$ is available only after block $t$. Hence, the set of all
such messages is given by
\begin{equation}
\T_k=\{m_{rs}\in \M|\D_k\ra m_{rs}, r<k\}.
\end{equation}
See Fig.~\ref{fig:rate_describe} for an example of $\T_k$ for
$k=4$.

To decode the messages in $\T_k$, note that for each message
$m_{lj}\in\T_k$, the probability that the corresponding codeword
$\mb{x}_{lj}\left(m_{lj}|\C^m_{lj}\left(t-(k-l)+1\right)\right)$,
$q\leq l<k$, generated according to
$p\left(x_{lj}|\C^x_{lj}\left(t-(k-l)+1\right)\right)$, is incorrectly
declared jointly typical with $\mb{y}_k^{t-(k-l)+1}$,
 given $\C^\xb_{lj}\left(t-(k-l)+1\right)$, is asymptotically bounded by
$2^{-nI(X_{lj};Y_k|\C^x_{lj})}$, where
$\C^\xb_{lj}\left(t-(k-l)+1\right)$ denotes the set of
 codewords  corresponding to messages in
$\C^m_{lj}\left(t-(k-1)+1\right)$.

We now upper bound the  error probability for joint decoding of all
messages in $\T_k$. The error probability of jointly decoding all
messages in $\T_k$ approaches zero as $n$ goes to infinity if for
every subset $\I'\subset \T_k$, the probability that all messages in
$\I'$ are decoded incorrectly asymptotically approaches zero. This is
similar to bounding the probability of decoding error in the multiple
access channel. See \cite[Theorem 14.3.5]{cover_elgamal}.

To enumerate all subsets $\I'$ of incorrectly decoded messages, we
take the following approach. Let $\I$ denote any subset of $\T_k$.
Note that if messages in $\I$ are decoded correctly, then any
message in $\T_k$ which is a parity of a message in $\I$ is also
decoded correctly. Thus, the set $\I'$ must be of the form
\begin{equation}
\I'=\{m_{li}\in\T_k|\I\nra m_{li}, \I\subset \T_k\}\nonumber.
\end{equation}
In order to bound the error probability, we need to count the
number of all valid message combinations\footnote{ A valid
message combination for a set of messages corresponds to a set of
values for the messages in the set consistent with the relationships
defined by the message tree. See Footnote \ref{fnt:valid_m} for an
example.}    for the sets $\T_k$, $\I$, and $\I'$. Let
$|\mathcal{W}|$ denote the number of valid combinations of
messages in the set $\mathcal{W}$. Note that for any set
$\J_\mathcal{W}$ that is both a generator and a subset of
$\mathcal{W}$, we have $|\J_{\mathcal{W}}|=|\mathcal{W}|$,
since messages in $\mathcal{W}$ are functions of the messages in
$\J_{\mathcal{W}}$. For example, $|\T_k|=|\D_k|$ as $\D_k$ is a
subset that  generates $\T_k$. To further simplify the computation
of $|\I|$, let $\J_{\I}$ be the minimal generator of $\I$, i.e.,
$\J_{\I}\ra\I$, and $\forall \mathcal{F}\subset\I,\mathcal{F}\ra \I:
\J_{\I}\subset\mathcal{F}$.  With this definition of $\J_{\I}$, we
have $|\I|=|\J_{\I}|$. See Fig.~\ref{fig:cardinality} for an example.

To compute $|\I'|$, note that $|\I'|=|\T_k|/|\I|$. This is
because, fixing $\I$, valid combinations of messages in $\T_k$
are constrained by the bin indices in $\I$. There are $|\I|$
valid bin indices in $\I$. The number of remaining valid
message combinations, which corresponds to $|\I'|$, is
therefore $|\T_k|/|\I|$. Now, using the decoding error
probability for individual messages in $\I'$, the probability
that no message in $\I'$ is decoded correctly is given by
\begin{align}\label{eq:rateI}
&|\I'|2^{-n\sum_{\forall hj:m_{hj}\in\I'}I(X_{hj};Y_k|\C^x_{hj})}\nonumber\\
&=\frac{|\T_k|}{|\I|}2^{-n\sum_{\forall hj:m_{hj}\in\I'}I(X_{hj};Y_k|\C^x_{hj})}\nonumber \\
&=\frac{|\T_k|}{|\J_{\I}|}2^{-n\sum_{\forall hj:m_{hj}\in\I'}I(X_{hj};Y_k|\C^x_{hj})}\nonumber \\
&=\frac{|\D_k|}{|\J_{\I}|}2^{-n\sum_{\forall hj:m_{hj}\in\I'}I(X_{hj};Y_k|\C^x_{hj})}.
\end{align}
Consequently, at the $k$th terminal, $1\leq k\leq K+1$,  the
joint decoding error probability asymptotically approaches zero
if for every subset $\I$, \eqref{eq:rateI} approaches zero as
$n$ goes to infinity. This can be ensured if the following
holds:
\begin{align}\label{eq:rateT}
&\underbrace{\sum_{\forall h,i: m_{hi}\in \D_k} R_{hi}}_{\log |\D_k|}\leq\nonumber\\
&\sum_{\forall
hj:m_{hj}\in\I'}I(X_{hj};Y_k|\C^x_{hj})&+\underbrace{\sum_{\forall l,j: m_{lj}\in \J_{\I}}R_{lj}}_{\log |\J_{\I}|}\qquad\forall \I\subset \T_k.
\end{align}

\begin{figure}
\centering
\iftwocol
\includegraphics[scale=0.45]{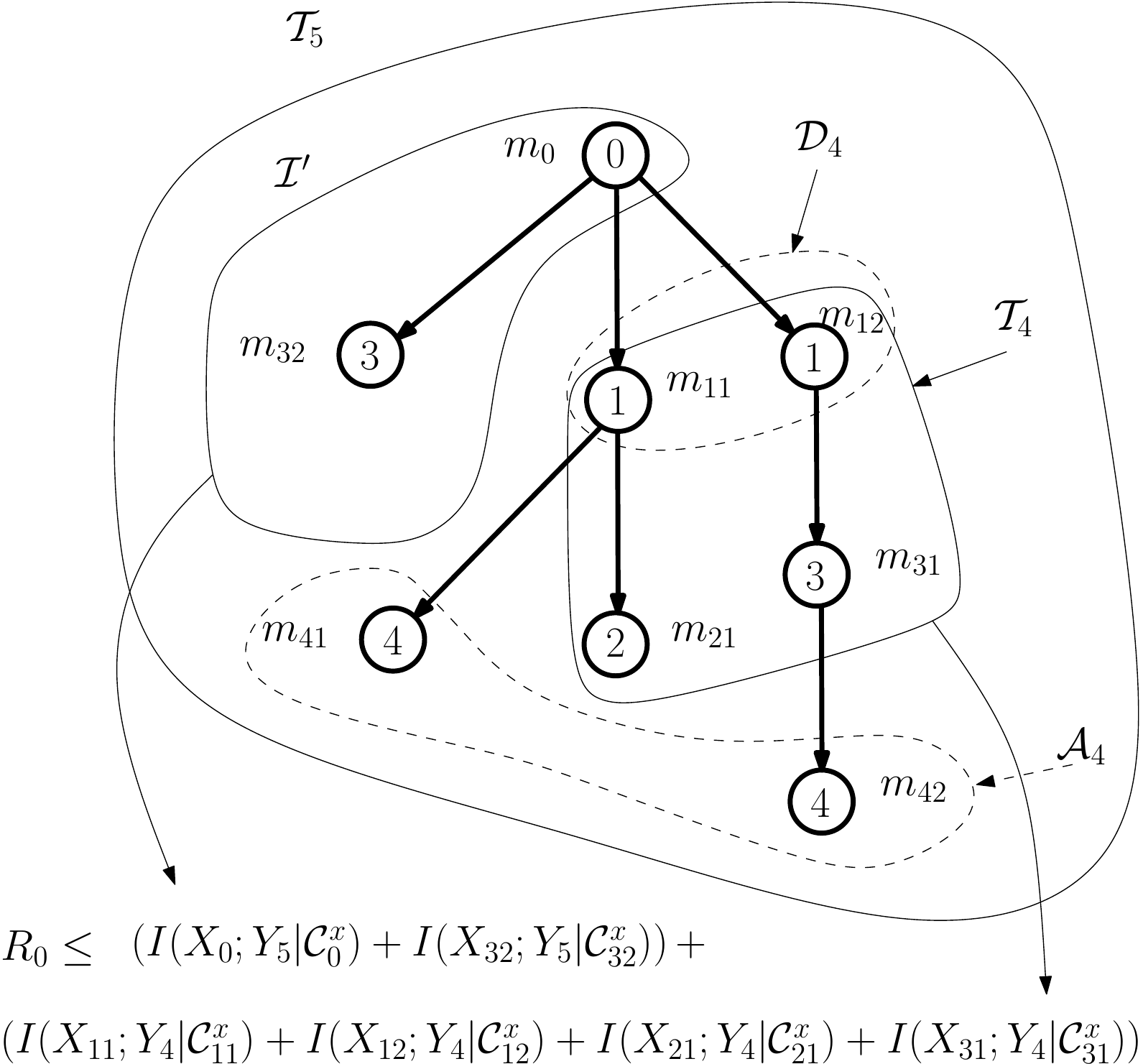}
\else
\includegraphics[scale=0.7]{rate_describe.pdf}
\fi
\caption{\label{fig:rate_describe} An example of $\D_4$, $\T_4$, and $\A_4$ sets and their effect on the rate constraints. The rate constrain is derived by combining the constraints for $\T_4$ and $\T_5$.}
\end{figure}

\begin{figure}
\centering
\iftwocol
\includegraphics[scale=0.5]{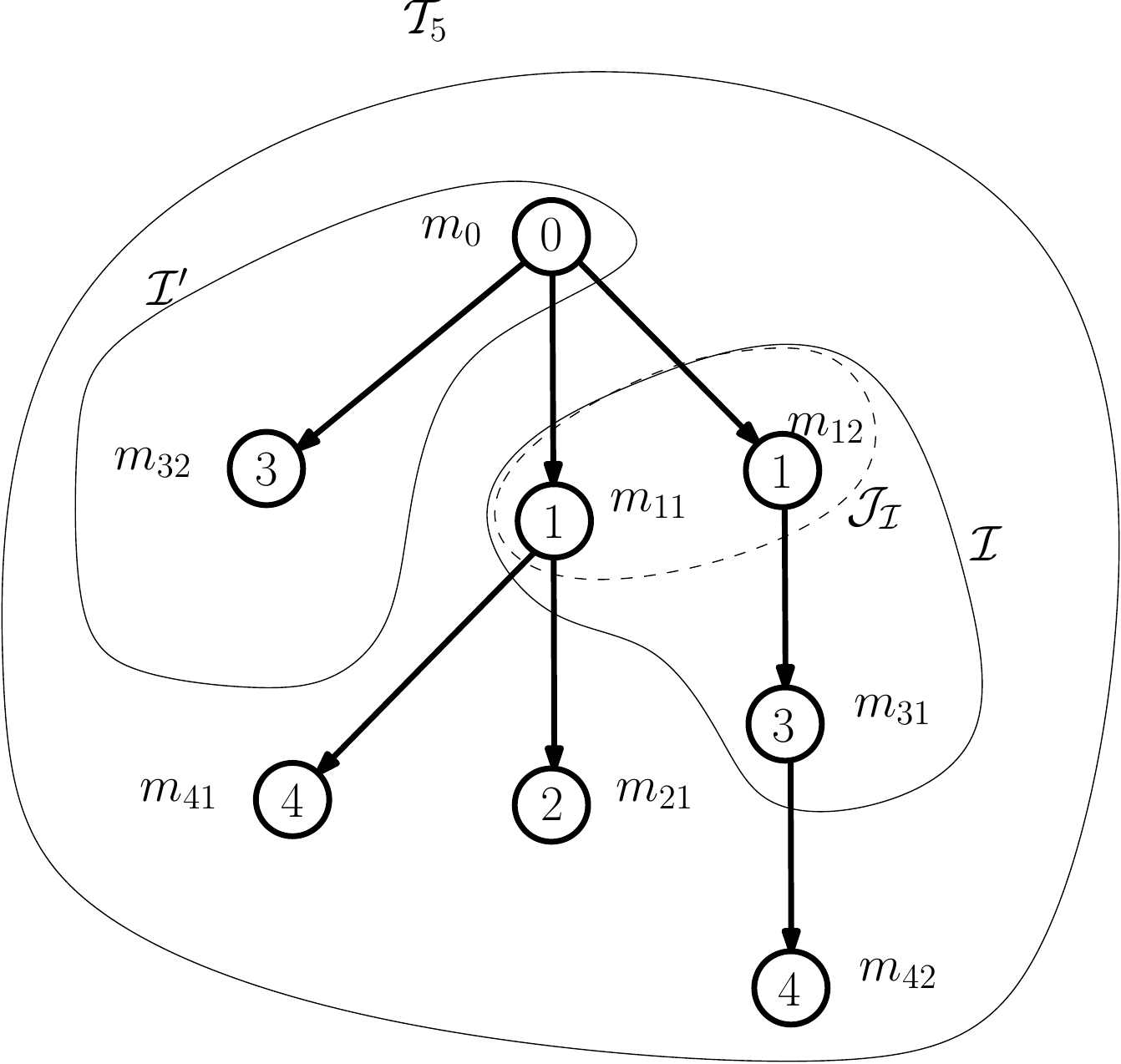}
\else
\includegraphics[scale=0.7]{cardinality.pdf}
\fi \caption{\label{fig:cardinality} An example of $\I$, $\I'$, and
$\J_{\I}$ subsets of $\T_5$, the set of all messages decoded at
the destination. Fixing messages in $\I$, the number of
valid combinations of messages in $\I'$, $|\I'|$, is given by
$|\I'|=|\T_5|/|\I|$.}
\end{figure}

Summarizing, the following theorem characterizes the achievable rate
of the parity forwarding protocol defined by $(\tau,\A,\D)$.

\begin{theorem}[Achievable Rate of Parity Forwarding]\label{thm:total_rate}
Consider a memoryless relay network with $K$ relay terminals
defined by the probability distribution\footnote{Recall that
$X_k\triangleq X_{kn_k}, 0\leq k\leq K$.}
\[p(y_1,y_2,\cdots,y_K,y_{K+1}|x_0,x_{1},x_{2},\cdots,x_{K}).\]
Using the parity forwarding protocol defined by $(\tau,\A,\D)$, the
source rate $R_{0}$ satisfying the following constraints maximized
over the probability distribution $\prod_{\forall l,i: m_{li}\in\M}
p(x_{li}|\C^x_{li})$ (along with a set of positive rates $R_{hi}$) is
achievable: \begin{align}\label{eq:rate} \sum_{\forall h,i:
m_{hi}\in \D_k} R_{hi}&\leq \sum_{\forall
hj:m_{hj}\in\I'}I(X_{hj};Y_k|\C^x_{hj})+\sum_{\forall l,j: m_{lj}\in \J_{\I}}R_{lj}\nonumber\\
&\qquad \qquad \forall \I\subset \T_k,\forall k:1\leq k\leq K+1.
\end{align}
\end{theorem}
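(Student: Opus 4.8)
The plan is to instantiate the block-Markov superposition codebook of Section~\ref{sec:multirelay:encoding} together with the window joint-typicality decoder described above, and to verify by induction on the block index that every decoding node recovers its intended messages with vanishing error probability; the region \eqref{eq:rate} then drops out of the per-subset bound \eqref{eq:rateT} once the freely inflatable parity rates are removed. Concretely, I would fix a block length $n$, run the protocol over $B+K$ blocks so that the source injects $B$ fresh messages $m_0^1,\ldots,m_0^B$ while the last $K$ blocks flush the relay pipeline, and draw the random codebooks exactly as prescribed by Definition~\ref{def:known_sets} and the surrounding construction, so that the operating law is $\prod_{m_{li}\in\M}p(x_{li}|\C^x_{li})$. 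In block $t$, the $k$th node decodes all messages of $\T_k=\{m_{rs}\in\M:\D_k\ra m_{rs},\ r<k\}$ from the window $(\yb_k^{t-(k-q)+1},\ldots,\yb_k^{t})$, where $q$ is the least relay order occurring in $\D_k$; since $\D_k\ra\A_k$ and $\D_{K+1}=\{m_0\}$, this in particular delivers $m_0$ at the destination.

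For the induction, assume that upon entering block $t$ the $k$th node has already correctly recovered every message whose decoding window closed before block $t$; by the timing rule \eqref{eq:timing} and the closure properties of the $\D_k$ sets, this includes every codeword in the side-information sets $\C^m_{lj}(t-(k-l)+1)$ needed in the current window's typicality tests, so those tests are run against the genuine known codewords. The base case covers the first blocks, where the relevant known sets are empty, matching the codebook construction that starts from $K$th-relay messages with empty $\C^m$. Under this hypothesis the analysis already carried out applies: for each $m_{lj}\in\T_k$, the probability that a wrong codeword is declared jointly typical with $\yb_k^{t-(k-l)+1}$ given the true known codewords is at most $2^{-nI(X_{lj};Y_k|\C^x_{lj})}$, and — the step that must be argued with care — these events are mutually independent across $m_{lj}$ after conditioning on the known codewords, because the tree order in which codebooks are drawn makes codewords indexed by distinct messages conditionally independent given their respective known codeword sets. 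Hence, for an admissible error pattern $\I'=\{m_{li}\in\T_k:\I\nra m_{li}\}$ the joint error probability is at most $|\I'|\,2^{-n\sum_{m_{hj}\in\I'}I(X_{hj};Y_k|\C^x_{hj})}$, and combining this with the counting identities $|\T_k|=|\D_k|$ (as $\T_k$ is a deterministic function of $\D_k$), $|\I|=|\J_\I|$, and $|\I'|=|\T_k|/|\I|$ yields exactly \eqref{eq:rateI}; a union bound over the finitely many $\I\subset\T_k$ shows the window error at node $k$ vanishes whenever \eqref{eq:rateT} holds strictly.

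Assembling the pieces, a further union bound over the $B+K$ blocks and the $K+1$ nodes keeps the codebook-averaged error probability $o(1)$ under the strict version of \eqref{eq:rateT}, so a deterministic codebook does the same; letting $n\to\infty$ and then $B\to\infty$ absorbs the $K$-block overhead and recovers \eqref{eq:rate} with non-strict inequalities. Finally, $R_0=R_{0n_0}$ enters the left-hand side of \eqref{eq:rate} only through the index $k$ with $m_0\in\D_k$, while every parity rate $R_{hi}$ appears additively on some right-hand side, so choosing the parity rates large deactivates all constraints except those bounding $R_0$, which are precisely \eqref{eq:rate}.

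The main obstacle I expect is making the induction genuinely airtight — certifying that at node $k$ the decoding windows of consecutive target messages tile the block axis without gaps and that no typicality test in block $t$ uses side information not yet decoded. This is exactly where \eqref{eq:timing} and the third property of the $\D_k$ sets (closure of an order under smaller second subscript) are needed: together they guarantee that any codeword in a known set $\C^m_{lj}$ of a message in $\T_k$ was itself among the messages decoded by node $k$ in a strictly earlier, already-completed window. A secondary, more mechanical point is the cross-message conditional independence inside $\I'$ from the tree-ordered codebook generation; this is the multiple-access-channel union-bound argument invoked in the decoding discussion above, which I would spell out by grouping the messages of $\I'$ by tree order and peeling off one conditioning layer at a time.
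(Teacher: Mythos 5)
Your proposal is correct and follows essentially the same route as the paper: the theorem is proved there by the superposition codebook construction over the known sets, window joint-typicality decoding of $\T_k$ at each node, enumeration of error patterns $\I'$ via subsets $\I\subset\T_k$, and the counting identities $|\T_k|=|\D_k|$, $|\I|=|\J_\I|$, $|\I'|=|\T_k|/|\I|$ leading to \eqref{eq:rateI}--\eqref{eq:rateT}. The extra care you take with the block-indexed induction, the $B+K$ flushing blocks, and the conditional independence of the per-message typicality errors simply makes explicit steps the paper leaves implicit (citing the multiple-access-channel argument), so no substantive difference remains.
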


It should be noted that some of the inequalities in (\ref{eq:rate})
may be redundant, allowing for further simplification of
(\ref{eq:rate}). In particular, an inequality in which there exists an
$R_{lj}$ on the right-hand side that does not appear on the
left-hand side of another inequality may be ignored, since such an
$R_{lj}$ is unbounded.

Fig.~\ref{fig:rate_describe} describes an example of rate
constraints produced by this theorem for the  network in
Fig.~\ref{fig:4relays_net}. The rate constraint on the source rate in
Fig.~\ref{fig:rate_describe}  is derived by combining the two
inequalities derived from \eqref{eq:rate} by setting $\J_{\I}=\{\}$
for $k=4$, and $\J_{\I}=\D_4$ for $k=5$.

\section{Examples of the Parity Forwarding Protocols \label{sec:examples}}
 In this section, a number of examples of parity forwarding
protocols are presented. These examples are chosen to illustrate
different aspects of the parity forwarding scheme in a multiple-relay
network, and the conditions under which parity forwarding is
capacity achieving. It is shown by example that previous DF rates
are achieved by appropriately  choosing  the message tree, the
message sets, and the decoding sets.

In the first example, it is demonstrated that the multihop DF rate is
achievable using parity forwarding. This example introduces a
simple form of a message tree:  the chain message tree. The next
two examples are also designed based on the chain message tree to
demonstrate that the parity forwarding protocol improves the
previous multihop DF rate. The difference between these three
protocols also highlights the impact of selecting different decoding
sets on the achieved rate. A new set of degradedness conditions are
found under which the parity forwarding protocols in these examples
are capacity achieving.

In addition, this section continues the two-relay example in Section
\ref{sec:2relays}. For the two-relay network, a parity forwarding
protocol that uses a message tree which is different from the chain
message tree is described. While the chain message tree examples
focus on the impact of the decoding sets on the achievable rate,
this two-relay example illustrates the impact of the message tree
on the achievable rate. Finally, this section concludes with a parity
forwarding example for the single-relay channel to illustrate source
message splitting.

\subsection{Achieving the Multihop Rate}
There are a number of parity forwarding protocols that achieve the
multihop rate of \cite{xie_kumar_rate}\footnote{One such parity
forwarding protocol is described in this section. Another parity
forwarding scheme with the same rate can be devised, for example,
by using a message tree in which the relay messages are
independent direct parities of the source message.}. A simple
message tree that can be used to achieve the multihop rate is the
{\em chain message tree} depicted in Fig.~\ref{fig:chain_tree}, in
which each message is a parity for its parent message. Depending
on which messages in the tree the relays decode or send, i.e.,
depending on the message sets and decoding sets, different parity
forwarding protocols with different rates are obtained. The multihop
rate can be achieved by having the source message $m_0$ decoded
at  all the relays.

\iftwocol
\begin{figure}
\centering
\includegraphics[scale=0.5]{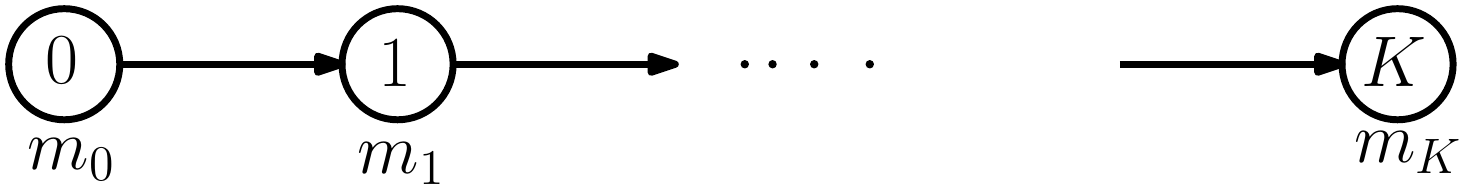}
\caption{Chain message tree.}
\label{fig:chain_tree} 
\end{figure}
\else
\begin{figure}
\centering
\includegraphics[scale=0.7]{chain_tree.pdf}
\caption{Chain message tree.}
\label{fig:chain_tree} 
\end{figure}
\fi

 Consider a network with a source and a destination and $K$
relays indexed by $k=1,\ldots,K$. In multihop parity
forwarding, every relay $k$ decodes the source message $m_{0}$
in the chain message tree in Fig. \ref{fig:chain_tree}, i.e.,
$\D_k=\{m_{0}\}$, and sends a parity message $m_k$ for the
message of the $(k-1)$th relay $m_{k-1}$, i.e., $\A_k=\{m_k\}$.
The known message set at the $k$th relay is given by
$\C^m_k=\{m_{k+1},m_{k+2},\cdots,m_{K}\}$. Hence, the set
$\T_k=\{m_0,m_1,\ldots,m_{k-1}\}$ should be jointly decoded at
the $k$th relay. Associating a random variable $X_k$ with the
message $m_k$ in the message tree, \eqref{eq:rate} in Theorem
\ref{thm:total_rate} can be rewritten for this parity
forwarding protocol as follows:
\begin{subequations}\label{eq:rate:multihop1}
\begin{align}
R_{0}&\leq \sum_{i=0}^{l-1}I(X_i,;Y_k|X_{i+1},\cdots,X_K) +R_l, \quad 0\leq l\leq k-1 \label{eq:rate:multihop1:a}\\
&{=}I(X_0,X_1,\ldots,X_{l-1};Y_k|X_{l},X_{l+1},\ldots,X_K)+R_l \label{eq:rate:multihop1:b}\\
R_{0}&\leq \sum_{i=0}^{k-1}I(X_i,;Y_k|X_{i+1},\cdots,X_K)  \label{eq:rate:multihop1:c}\\
&{=}I(X_0,X_1,\ldots,X_{k-1};Y_k|X_k,X_{k+1},\ldots,X_K)&\label{eq:rate:multihop1:d}
\end{align}
\end{subequations}
for each $k$, $1\leq k\leq K+1$. Inequalities in
\eqref{eq:rate:multihop1:a} are derived from \eqref{eq:rate} for
subsets $\I$ of $\T_k$ for which
$\J_{\I}=\{m_l\},\I'=\{m_0,m_1,\ldots,m_{l-1}\} $. Similarly,
\eqref{eq:rate:multihop1:c} is derived from \eqref{eq:rate} for
$\I=\{\},\I'=\{m_0,m_1,\ldots,m_{k-1}\}$.

Now, note that in \eqref{eq:rate:multihop1}, all rates $R_k$ for
$1\leq k\leq K$ appear only on the right-hand side and thus are
unbounded. Hence, \eqref{eq:rate:multihop1} can be simplified by
ignoring those inequalities with an $R_k$ on their right-hand side to
achieve the following rate which is equal to the multihop rate of
\cite{xie_kumar_rate}:

\begin{subequations}\label{eq:rate:multihop}
\begin{align}
R_0&\leq I(X_0,X_1,\ldots,X_{k-1};Y_k|X_{k},\ldots,X_K)& 1\leq k\leq K+1 \nonumber
\end{align}
\end{subequations}

%

\subsection{Short-Range Relays}
A different choice of decoded messages at the relays results in a
different achievable rate in the previous example. For the chain
message tree shown in Fig.~\ref{fig:chain_tree}, each relay may
only decode the message of its predecessor, i.e., setting
$\D_k=\{m_{k-1}\}$, and send a parity for $m_{k-1}$, i.e.,
$\A_k=\{m_k\}$ same as in the previous example. This would be a
good  scheme if each relay terminal has a small range  and  can
only communicate to its successor relay (e.g., the degraded network
shown in Fig. \ref{fig:chain_network1}).

Theorem \ref{thm:total_rate} can be used to give the achievable
rate of this parity forwarding protocol. By setting
$\D_k=\{m_{k-1}\}$, $1\leq k\leq K$, the $k$th relay in each block
can compute parity messages with orders greater than $k$, i.e.,
$\C^m_k=\{m_{k+1},m_{k+2},\cdots,m_{K}\}$, $0\leq k\leq K$.
The only message with  an order less than $k$ that is parity of
$\D_k$ is $m_{k-1}$, hence, $\T_k=\{m_{k-1}\}$, $0<k\leq K$,
$1<k\leq K$. For the destination, $\D_{K+1}=\{m_0\}$, and
$\T_{K+1}=\{m_0,m_1,\ldots,m_{K}\}$.

Now, for $1\leq k\leq K$, the only subsets $\I$ of
$\T_k=\{m_{k-1}\}$ are $\{\}$ and $\T_k$. Hence, the inequalities
corresponding to $1\leq k\leq K$ in (\ref{eq:rate}) are given by
\begin{subequations}\label{eq:rate:short_range:explain}
\begin{align}
R_{k-1}&\leq I(X_{k-1};Y_k|X_k,\cdots,X_K)& & :\I=\{\},\I'=\{m_{k-1}\} \\
R_{k-1}&\leq R_{k-1} & &:\I=\{m_{k-1}\},\I'=\{\}.
\end{align}
\end{subequations}
Finally, since $\T_{K+1}=\{m_0,m_1,\cdots,m_{K}\}$ and
$\D_{K+1}=\{m_0\}$, \eqref{eq:rate} gives the following
inequalities for $k=K+1$:
\begin{subequations}\label{eq:rate_K+1}
\begin{align}
R_0&\leq \sum_{i=0}^{l-1} I(X_i;Y_{K+1}|X_{i+1},\cdots,X_{K}) + R_l, 1\leq l \leq K\\
R_0&\leq \sum_{i=0}^{K} I(X_i;Y_{K+1}|X_{i+1},\cdots,X_{K}).
\end{align}
\end{subequations}
The above inequalities are derived by setting $\J_{\I}=\{m_l\}$
and $\I'=\{m_0,\cdots,m_{l-1}\}$ for $1\leq l\leq K$, and $\J_{\I}=\{\}$ for $l=K+1$.

\begin{figure}
  \begin{center}
  \iftwocol
        {\scalebox{1}{\includegraphics[scale=0.4]{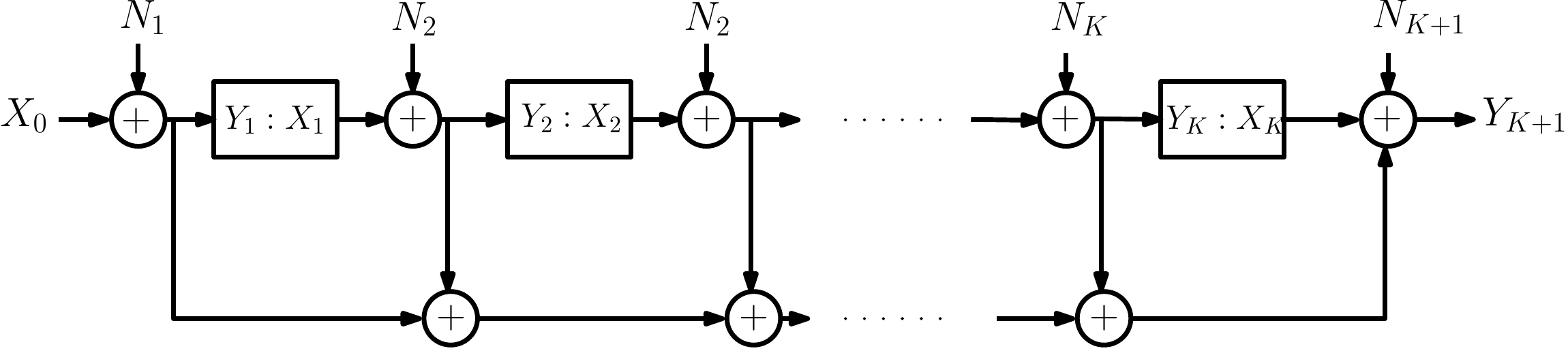}}}
        \else
        {\scalebox{1}{\includegraphics[scale=0.7]{g_network1.pdf}}}
        \fi
        \caption{\label{fig:chain_network1}A degraded chain network with additive noises $N_k$, $1\leq k\leq K$. Each relay can directly communicate only to its successor terminal.
         The $k$th relay decodes the message of relay $k-1$ and forwards a parity for it.}
  \end{center}
\end{figure}

Using the chain rule for mutual information and substituting the
constraints on $R_l$ for $1\leq l\leq K$ in \eqref{eq:rate_K+1}
results in an achievable rate for this network specified in the next
theorem.
\begin{theorem}\label{thm:short_range}
For a memoryless $K$-relay network defined by
$p(y_1,y_2,\ldots,y_{K+1}|x_0,x_1,\ldots,x_K)$, fixing any
$p(x_0,x_1,\ldots,x_K)$, the source rate $R_0$ satisfying the
following constraints is achievable:
\begin{align}\label{eq:chain:1:rate}
R_0&<I(X_0;Y_1|X_1,X_2,\ldots,X_K)\\
R_0&<I(X_0;Y_{K+1}|X_1,\ldots,X_K)+I(X_1;Y_2|X_2,\ldots,X_K)\nonumber\\
R_0&<I(X_0,X_1;Y_{K+1}|X_2,\ldots,X_K)+I(X_2;Y_3|X_3,\ldots,X_K)\nonumber\\
&\vdots\nonumber\\
R_0&<I(X_0^{K-2};Y_{K+1}|X_{K-1},X_K)+I(X_{K-1};Y_K|X_K)\nonumber\\
R_0&<I(X_0^K;Y_{K+1})\nonumber
\end{align}
where $X_i^j\triangleq (X_i,X_{i+1},\ldots,X_j)$. Further, the
above rate maximized over $p(x_0,x_1,\ldots, x_K)$ is the capacity
of this network if $X_k-(Y_{k+1},X_{k+1}^K)-Y_{k+2}^{K+1}$
and $X_0^{k-1}-(Y_{K+1},X_k^K)-Y_{k+1}^{K}$ form Markov
chains.
\end{theorem}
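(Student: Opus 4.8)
The plan is to prove the two assertions of Theorem~\ref{thm:short_range} separately: the achievability of the region \eqref{eq:chain:1:rate} is obtained by finishing the computation already started before the statement from Theorem~\ref{thm:total_rate}, and the converse (under the stated Markov conditions) is obtained by collapsing the cut-set bound along the chain of nested cuts.

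For achievability I would simply substitute and simplify. Equations \eqref{eq:rate:short_range:explain} give the relay constraints $R_{k-1}\le I(X_{k-1};Y_k|X_k,\ldots,X_K)$ for $1\le k\le K$, and \eqref{eq:rate_K+1} gives the destination constraints $R_0\le\sum_{i=0}^{l-1}I(X_i;Y_{K+1}|X_{i+1},\ldots,X_K)+R_l$ for $1\le l\le K$ together with $R_0\le\sum_{i=0}^{K}I(X_i;Y_{K+1}|X_{i+1},\ldots,X_K)$. Plugging the bound on $R_l$ into the $l$-th destination constraint and collapsing the telescoping sum via the chain rule, $\sum_{i=0}^{l-1}I(X_i;Y_{K+1}|X_{i+1}^K)=I(X_0^{l-1};Y_{K+1}|X_l^K)$, yields exactly the inequalities of \eqref{eq:chain:1:rate}; the $l=K$ inequality then coincides (again by the chain rule) with the last destination constraint and is discarded as redundant, and the first inequality $R_0<I(X_0;Y_1|X_1^K)$ is the $k=1$ instance of \eqref{eq:rate:short_range:explain}. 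This half is just bookkeeping.

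For the converse I would invoke the cut-set bound for the $K$-relay network applied to the nested cuts $S_k=\{0,1,\ldots,k-1\}$ for $k=1,\ldots,K+1$: the receiver side of $S_k$ hears $(Y_k,\ldots,Y_{K+1})$ and only nodes $0,\ldots,k-1$ transmit from the source side, so $R_0\le I(X_0^{k-1};Y_k^{K+1}\,|\,X_k^K)$ for some input distribution $p(x_0,\ldots,x_K)$. The key step is to collapse this term: split off $X_{k-1}$ by the chain rule as $I(X_0^{k-1};Y_k^{K+1}|X_k^K)=I(X_{k-1};Y_k^{K+1}|X_k^K)+I(X_0^{k-2};Y_k^{K+1}|X_{k-1}^K)$. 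The first summand equals $I(X_{k-1};Y_k|X_k^K)$ because the Markov chain $X_{k-1}-(Y_k,X_k^K)-Y_{k+1}^{K+1}$ (the first family of the theorem with index $k-1$) forces $I(X_{k-1};Y_{k+1}^{K+1}|Y_k,X_k^K)=0$; the second summand equals $I(X_0^{k-2};Y_{K+1}|X_{k-1}^K)$ because the Markov chain $X_0^{k-2}-(Y_{K+1},X_{k-1}^K)-Y_k^{K}$ (the second family with index $k-1$) forces $I(X_0^{k-2};Y_k^{K}|Y_{K+1},X_{k-1}^K)=0$. Thus cut $S_k$ gives precisely the constraint of \eqref{eq:chain:1:rate} indexed by $l=k-1$, the boundary cases $k=1$ and $k=K+1$ degenerating to $R_0\le I(X_0;Y_1|X_1^K)$ and $R_0\le I(X_0^K;Y_{K+1})$. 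Since the cut-set bound minimizes over all cuts, $C\le\max_p\min_{\text{all cuts}}(\cdot)\le\max_p\min_{1\le k\le K+1}I(X_0^{k-1};Y_k^{K+1}|X_k^K)$, which under the Markov conditions equals the maximized right-hand side of \eqref{eq:chain:1:rate}; the achievability half supplies the matching lower bound, so the maximized region is the capacity.

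The hard part is not conceptual but indexing discipline: the cut-set bound is written in the excerpt only for $K=2$ (eq.~\eqref{eq:cut-set}), so I would first record its general form, and then I must track the two index shifts through the chain-rule collapses and verify the degenerate cases $k=1$ and $k=K+1$, where $X_k^K$ or the trailing $Y$-blocks become empty. Everything else is entropy-inequality manipulation entirely parallel to the degraded single-relay converse of Cover--El~Gamal and to the ``doubly degraded'' two-relay converse given earlier in the paper.
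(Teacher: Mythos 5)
Your proposal is correct and follows essentially the same route as the paper: achievability by substituting the relay constraints $R_{l}\le I(X_{l};Y_{l+1}|X_{l+1}^K)$ into the destination constraints of \eqref{eq:rate_K+1} and telescoping with the chain rule, and the converse by expanding each nested cut-set term $I(X_0^{k-1};Y_k^{K+1}|X_k^K)$ into two summands and killing the extra $Y$'s with the two stated Markov chains. The only difference is a harmless index shift (your cuts run $k=1,\ldots,K+1$ where the paper uses $k=0,\ldots,K$), and your identification of which Markov chain collapses which summand matches the paper's steps (b) and (c) exactly.
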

\begin{proof}
The achievability follows from the statements leading to the
theorem. The converse can be proved using the cut-set bound
\cite[Theorem 14.10.1]{cover_elements}.

The cut-set bound states that the source rate $R_0$ is upper
bounded by the following inequalities for $0\leq k\leq K$:
\begin{align}\label{eq:short_range:cut-set}
R_0&<I(X_0,\ldots,X_k;Y_{k+1},\ldots,Y_{K+1}|X_{k+1},\ldots,X_{K})
\end{align}
for some $p(x_0,x_1,\ldots, x_K)$. The above upper bound coincides
with the achievable rate in \eqref{eq:chain:1:rate} if
$X_k-(Y_{k+1},X_{k+1}^K)-Y_{k+2}^{K+1}$ and
$X_0^{k-1}-(Y_{K+1},X_k^K)-Y_{k+1}^{K}$ form Markov chains.
This can be proved by expanding \eqref{eq:short_range:cut-set} as
follows:
\begin{subequations}
\begin{align}
&I(X_0^k;Y_{k+1}^{K+1}|X_{k+1}^K)\nonumber\\
&\overset{(a)}{=}I(X_k;Y_{k+1}^{K+1}|X_{k+1}^K)+I(X_0^{k-1};Y_{k+1}^{K+1}|X_{k}^K)\nonumber\\
&\overset{(b)}{=}I(X_k;Y_{k+1}|X_{k+1}^K)+I(X_0^{k-1};Y_{k+1}^{K+1}|X_k^K)\nonumber\\
&\overset{(c)}{=}I(X_k;Y_{k+1}|X_{k+1}^K)+I(X_0^{k-1};Y_{K+1}|X_k^K)\nonumber,
\end{align}
\end{subequations}
where (a) follows from the chain rule for mutual information, (b)
holds since $X_k-(Y_{k+1},X_{k+1}^K)-Y_{k+2}^{K+1}$ is a
Markov chain, and (c) holds since
$X_0^{k-1}-(Y_{K+1},X_k^K)-Y_{k+1}^{K}$ is a Markov chain.
Fig.~\ref{fig:chain_network1} shows an example of such a degraded
network with additive noise at receivers.
\end{proof}

Intuitively, each rate constraint in the above achievable rate
consists of two components: the $I(X_{k};Y_{k+1}|X_{k+1}^K)$
component represents the transferable information from the $k$th
relay terminal to its successor, and the other component
$I(X_0^{k-1};Y_{K+1}|X_k^K)$ corresponds to the rate at which
source and the first $k-1$ relays  can cooperatively communicate to
the destination. The degradedness conditions in Theorem
\ref{thm:short_range} ensure that  the achievable rate coincides
with the cut-set defined at the $k$th relay separating the source
and the first $k$ relays from the destination and the rest of the
relays.

The rate achieved by this protocol can be higher than the multihop
rate if relays have a short range and  the channel from the source
to  relays that are far away  is blocked. This example generalizes
Protocol B introduced in Section \ref{sec:2relays}. It demonstrates
that the best choice of decoding sets at the relays depends on the
network condition. In general, the best achievable parity forwarding
rate is obtained by searching through all possible parity forwarding
protocols for the network.

\subsection{Coupled Relays}
It is possible to group the relays to cooperate in the previous
example.  The example in this section describes a parity forwarding
protocol based on the same chain message tree introduced in the
previous example, but in which the relays are coupled in groups of
two, cooperatively  communicating to the next relay. In the network
shown in Fig.~\ref{fig:chain_network2}, the source and the first
relay cooperatively communicate to the second relay. The second
relay and the third relay cooperatively communicate to the fourth
relay, and so on. Assume that the number of relays $K$ is odd.
Similar to the previous examples,  the $k$th relay sends a parity
message for the message of relay $k-1$, i.e., $\A_k=\{m_k\}$. The
difference as compared to the previous examples, is the choice of
messages that are decoded at relays. For odd $k$'s, the $k$th relay
decodes the message of relay $k-1$, and for even $k$'s, it decodes
the message of relay $k-2$. This scenario corresponds to setting
$\D_k=\{m_{k-1}\}$ for odd $k$, and $\D_k=\{m_{k-2}\}$ for even
$k$. Hence, $\T_k=\{m_{k-1}\}$ for odd $k$, and
$\T_k=\{m_{k-2},m_{k-1}\}$ for even $k$ (see
Fig.~\ref{fig:chain_rate}). For the destination,
$\T_{K+1}=\{m_0,\cdots,m_{K}\}$, and $\D_{K+1}=\{m_0\}$.
The following theorem specifies  an achievable rate for this setting.

\begin{figure}
\iftwocol
{\scalebox{1}{\includegraphics[scale=0.39]{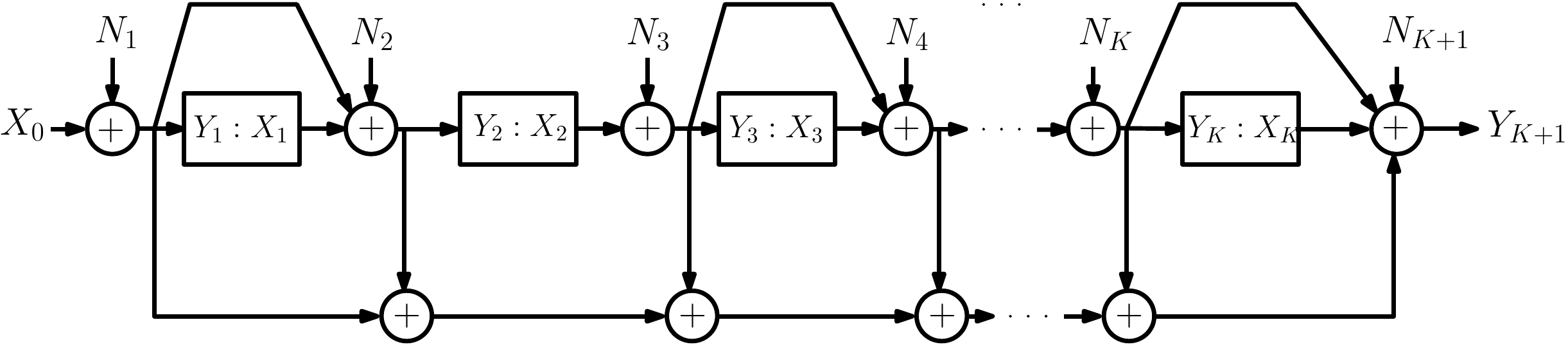}}}
\else
{\scalebox{1}{\includegraphics[scale=0.7]{g_network2.pdf}}}
\fi
\caption{\label{fig:chain_network2} A degraded  chain network with
additive noise. For odd $k$'s, the $k$th relay, $1\leq k\leq K$,
helps relay $k+1$ decode the message of relay $k-1$.}
\end{figure}

\begin{figure}[t]
  \begin{center}
  \iftwocol
        {\scalebox{1}{\includegraphics[scale=0.4]{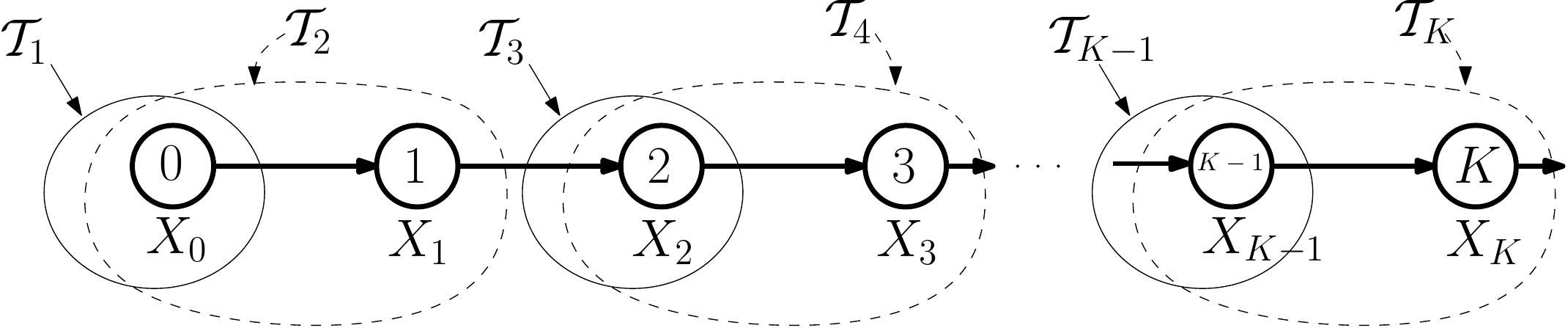}}}
        \else
        {\scalebox{1}{\includegraphics[scale=0.5]{chain_rate.pdf}}}
        \fi
        \caption{\label{fig:chain_rate}The sets $\T_k$ for the example in Fig.~\ref{fig:chain_network2}.}
  \end{center}
\end{figure}

\begin{theorem}\label{thm:coupled_range}
For a memoryless $K$-relay network, $K$ odd, defined by
$p(y_1,y_2,\ldots,y_{K+1}|x_0,x_1,\ldots,x_K)$, fixing any
$p(x_0,x_1,\ldots,x_K)$, the source rate $R_0$ satisfying the
following constraints is achievable: \iftwocol
\begin{align}
R_0&<I(X_0;Y_1|X_1^K)\label{eq:couple_rate}\\
R_0&<I(X_0,X_1;Y_2|X_2^K)\nonumber\\
R_0&<I(X_2;Y_3|X_3^K)+I(X_0,X_1;Y_{K+1}|X_2^K)\nonumber\\
R_0&<I(X_2,X_3;Y_4|X_4^K)+I(X_0,X_1;Y_{K+1}|X_2^K).\nonumber\\
&\vdots\nonumber\\
R_0&<I(X_{K-2};Y_{K-1}|X_{K-1},X_K)+I(X_0^{K-3};Y_{K+1}|X_{K-2}^K)\nonumber\\
R_0&<I(X_{K-2}^{K-1};Y_K|X_K)+ I(X_0^{K-3};Y_{K+1}|X_{K-2}^K) \nonumber \\
R_0&<I(X_0,X_1,\cdots,X_K;Y_{K+1}).\nonumber
\end{align}
\else{
\begin{align}
R_0&<I(X_0;Y_1|X_1,\cdots,X_K)\label{eq:couple_rate}\\
R_0&<I(X_0,X_1;Y_2|X_2,\cdots,X_K)\nonumber\\
R_0&<I(X_2;Y_3|X_3,\cdots,X_K)+I(X_0,X_1;Y_{K+1}|X_2,\cdots,X_K)\nonumber\\
R_0&<I(X_2,X_3;Y_4|X_4,\cdots,X_K)+I(X_0,X_1;Y_{K+1}|X_2,\cdots,X_K).\nonumber
\end{align}\vspace{-0.7cm}
\begin{align}
&\vdots\nonumber\\
R_0&<I(X_{K-2};Y_{K-1}|X_{K-1},X_K)+\nonumber\\
& \qquad \quad \qquad  I(X_0,X_1,\cdots,X_{K-3};Y_{K+1}|X_{K-2},\cdots,X_K)\nonumber\\
R_0&<I(X_{K-2},X_{K-1};Y_K|X_K)+\nonumber\\
&\qquad \qquad \qquad I(X_0,X_1,\cdots,X_{K-3};Y_{K+1}|X_{K-2},X_{K-1},X_K) \nonumber \\
R_0&<I(X_0,X_1,\cdots,X_K;Y_{K+1}).\nonumber.
\end{align}
\fi Further, the above rate maximized over $p(x_0,x_1,\ldots,x_K)$
is the capacity of this network if
$X_k-(Y_{k+1},X_{k+1}^K)-Y_{k+2}^K$ and
$X_0^{k-1}-(Y_{K+1},X_k^K)-Y_{k+1}^{K}$ are Markov chains for
$k$ even, and
$(X_{k-1},X_k)-(Y_{k+1},X_{k+1}^K)-Y_{k+2}^{K+1}$ and
$X_0^{k-2}-(Y_{K+1},X_{k-1}^K)-Y_{k+1}^{K}$ form Markov
chains for $k$ odd.
\end{theorem}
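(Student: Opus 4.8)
The structure of the proof mirrors that of Theorem~\ref{thm:short_range}: the achievable rate is obtained by specializing Theorem~\ref{thm:total_rate} and then eliminating the auxiliary relay rates, and the converse is the cut-set bound simplified with the stated Markov chains.

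For achievability, the plan is to apply Theorem~\ref{thm:total_rate} to the protocol $(\tau,\A,\D)$ consisting of the chain message tree of Fig.~\ref{fig:chain_tree}, the message sets $\A_k=\{m_k\}$, the decoding sets $\D_k=\{m_{k-1}\}$ for odd $k$ and $\D_k=\{m_{k-2}\}$ for even $k$, and $\D_{K+1}=\{m_0\}$; one first checks the three decoding-set properties, which are immediate here since every $\A_k$ is a singleton. With the $\T_k$ listed before the theorem ($\T_k=\{m_{k-1}\}$ for odd $k\le K$, $\T_k=\{m_{k-2},m_{k-1}\}$ for even $k\le K$, $\T_{K+1}=\{m_0,\dots,m_K\}$) and the random variable $X_i$ associated with $m_i$, I would enumerate the subsets $\I\subset\T_k$ in \eqref{eq:rate}. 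The singleton $\T_k$ (odd $k\le K$) yields $R_{k-1}\le I(X_{k-1};Y_k|X_k^K)$; the two-element $\T_k$ (even $k\le K$) yields $R_{k-2}\le I(X_{k-2},X_{k-1};Y_k|X_k^K)$ and $R_{k-2}\le I(X_{k-2};Y_k|X_{k-1}^K)+R_{k-1}$; and $\T_{K+1}$ yields the chain $R_0\le\sum_{i=0}^{l-1}I(X_i;Y_{K+1}|X_{i+1}^K)+R_l$ for $1\le l\le K$ together with $R_0\le I(X_0^K;Y_{K+1})$.

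The remaining step is Fourier--Motzkin elimination of $R_1,\dots,R_K$. The odd-indexed rates $R_1,R_3,\dots,R_K$ occur only on right-hand sides, so they can be raised to their parent rates; the destination inequality indexed by an odd $l$ then becomes redundant, since $I(X_0^{l-1};Y_{K+1}|X_l^K)=I(X_0^{l-2};Y_{K+1}|X_{l-1}^K)+I(X_{l-1};Y_{K+1}|X_l^K)$ shows it is dominated by the inequality for $l-1$. The even-indexed rates $R_2,R_4,\dots,R_{K-1}$ are each bounded above by the two relay inequalities above, and substituting these into the surviving destination inequalities $R_0\le I(X_0^{l-1};Y_{K+1}|X_l^K)+R_l$, combining terms with the chain rule, and dropping any inequality that still contains an unconstrained rate, yields \eqref{eq:couple_rate} (the last line coming from $\I=\{\}$ at the destination). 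I expect this elimination --- tracking the odd/even cases and verifying exactly which inequalities survive --- to be the main piece of bookkeeping, though it is not conceptually difficult.

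For the converse I would use the cut-set bound \cite[Theorem~14.10.1]{cover_elements}, which gives $R_0\le I(X_0^k;Y_{k+1}^{K+1}|X_{k+1}^K)$ for each $k$ and some $p(x_0,\dots,x_K)$, and then simplify cut by cut exactly as in Theorem~\ref{thm:short_range}. For even $k$, writing $I(X_0^k;Y_{k+1}^{K+1}|X_{k+1}^K)=I(X_k;Y_{k+1}^{K+1}|X_{k+1}^K)+I(X_0^{k-1};Y_{k+1}^{K+1}|X_k^K)$ and applying $X_k-(Y_{k+1},X_{k+1}^K)-Y_{k+2}^{K+1}$ to the first term and $X_0^{k-1}-(Y_{K+1},X_k^K)-Y_{k+1}^K$ to the second collapses the cut to $I(X_k;Y_{k+1}|X_{k+1}^K)+I(X_0^{k-1};Y_{K+1}|X_k^K)$; for odd $k$, the same expansion with the grouped pair $(X_{k-1},X_k)$ and the two odd-$k$ Markov chains collapses the cut to $I(X_{k-1},X_k;Y_{k+1}|X_{k+1}^K)+I(X_0^{k-2};Y_{K+1}|X_{k-1}^K)$. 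Identifying these collapsed expressions with the constraints produced by the achievability argument (noting that $X_{K+1}^K$ is empty, so the $k=K$ cut equals $I(X_0^K;Y_{K+1})$ by the chain rule) shows that the achievable region meets the cut-set bound and is therefore the capacity.
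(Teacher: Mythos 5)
Your proposal is correct and follows essentially the same route as the paper: achievability by specializing Theorem~\ref{thm:total_rate} to the chain tree with $\D_k=\{m_{k-1}\}$ ($k$ odd), $\D_k=\{m_{k-2}\}$ ($k$ even), enumerating the subsets of $\T_k$, and eliminating the unconstrained odd-indexed rates; converse by the cut-set bound collapsed cut-by-cut with the stated Markov chains. (Minor note: your identification $\I=\{m_{k-1}\},\I'=\{m_{k-2}\}$ for the even-$k$ constraint with $R_{k-1}$ on the right is actually the consistent labeling, whereas the paper's proof mislabels that subset; the resulting inequality is the same.)
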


\begin{proof}
This rate is derived using \eqref{eq:rate} as follows. For $k$ odd,
$\T_k=\{m_{k-1}\}$ has only two subsets $\{\}$ and $\T_k$. Thus,
\eqref{eq:rate} for odd $k$, $1\leq k\leq K$, results in:
\begin{align}
R_{k-1}&\leq I(X_{k-1};Y_k|X_k,\cdots,X_K)& & :\I=\{\},\I'=\{m_{k-1}\} \nonumber\\
R_{k-1}&\leq R_{k-1} & &:\I=\{m_{k-1}\},\I'=\{\}.\nonumber
\end{align}
For even $k$, $\T_k=\{m_{k-2},m_{k-1}\}$ which has subsets
$\{\}$, $\{m_{k-2}\}$, $\{m_{k-1}\}$, and $\T_k$. Consequently,
the corresponding constraints in \eqref{eq:rate} for even $k$,
$2\leq k<K$, state that
\begin{subequations}
\iftwocol
\begin{align}
\intertext{for $\I=\{\},\I'=\T_k:$} 
\quad R_{k-2}&\leq I(X_{k-2};Y_k|X_{k-1},\cdots,X_K)\nonumber\\
&\qquad\qquad\qquad+I(X_{k-1};Y_k|X_k,\ldots,X_K),\\ \intertext{for $\I=\{m_{k-2}\},\I'=\{m_{k-2}\}$:}
\quad R_{k-2}&\leq I(X_{k-2};Y_k|X_{k-1},\cdots,X_K)+R_{k-1},\\
\intertext{and for $\I=\T_k,\I'=\{\}:$}
\quad R_{k-2}&\leq R_{k-2}
\end{align}
\else
\begin{align}
R_{k-2}&\leq I(X_{k-2};Y_k|X_k,\cdots,X_K)+I(X_{k-1};Y_k|X_k,\ldots,X_K)& & :\I=\{\},\I'=\T_k \\
R_{k-2}&\leq I(X_{k-2};Y_k|X_k,\cdots,X_K)+R_{k-1} & &:\I=\{m_{k-2}\},\I'=\{m_{k-2}\}\\
R_{k-2}&\leq R_{k-2} & &:\I=\T_k,\I'=\{\}.
\end{align}
\fi
\end{subequations}
Finally, the rate constraints at the destination are also given by a
set of inequalities given in \eqref{eq:rate_K+1}.  The rate given in
\eqref{eq:couple_rate} is obtained by using the chain rule for
mutual information and ignoring constraints that have a rate $R_k$
for an odd $k$ on their right-hand sides, as only  $R_k$'s with an
even $k$ are constrained by the above set of bounds.

The converse is proved using the cut-set bound. For even $k$'s,
$1\leq k\leq K$, the cut-set bound results in a set of upper bounds
similar to those in \eqref{eq:short_range:cut-set} which are
achievable using this protocol if  for $k$ even, $0\leq k\leq K-1$,
$X_k-(Y_{k+1},X_{k+1}^K)-Y_{k+2}^K$  and
$X_0^{k-1}-(Y_{K+1},X_k^K)-Y_{k+1}^{K}$ are Markov chains.
For odd $k$'s, the cut-set bound coincides with
\eqref{eq:couple_rate} if
$(X_{k-1},X_k)-(Y_{k+1},X_{k+1}^K)-Y_{k+2}^{K+1}$ and
$X_0^{k-2}-(Y_{K+1},X_{k-1}^K)-Y_{k+1}^{K}$ form Markov
chains. This is proved in the following:
\begin{subequations}\label{eq:coupled:cut-set}
\begin{align}
R_0&<I(X_0,\ldots,X_k;Y_{k+1},\ldots,Y_{K+1}|X_{k+1},\ldots,X_{K})\\
&\overset{(a)}{=}I(X_{k-1},X_k;Y_{k+1}^{K+1}|X_{k+1}^K)+I(X_0^{k-2};Y_{k+1}^{K+1}|X_{k-1}^K)\\
&\overset{(b)}{=}I(X_{k-1},X_k;Y_{k+1}|X_{k+1}^{K+1})+I(X_0^{k-2};Y_{k+1}^{K+1}|X_{k-1}^K)\\
&\overset{(c)}{=}I(X_{k-1},X_k;Y_{k+1}|X_{k+1}^{K+1})+I(X_0^{k-2};Y_{K+1}|X_{k-1}^K),
\end{align}
\end{subequations}
where (a) follows from the chain rule for mutual information, (b)
holds since
$(X_{k-1},X_k)-(Y_{k+1},X_{k+1}^K)-Y_{k+2}^{K+1}$ forms a
Markov chain for odd $k$, $1\leq k\leq K$, and (c) holds since
$X_0^{k-2}-(Y_{K+1},X_{k-1}^K)-Y_{k+1}^{K}$ is a Markov
chain.
\end{proof}

An example of such a degraded network is shown in
Fig.~\ref{fig:chain_network2}. For this network,
\eqref{eq:couple_rate} achieves the cut-set bound  if we consider
the cut-set defined at the $k$th relay, separating the source and
the first $k$ relays from the destination and relays $k+1$ up to
$K$. For an odd $k$, the rate of this cut-set  equals to the
cooperative information rate from relays $k-1$ and $k$  to relay
$k+1$, plus the cooperative rate from the source and relays 1 up to
$k-2$ to the destination. For an even $k$, the rate of this cut-set is
given by the rate at which the $k$th terminal can communicate to
relay $k+1$, plus the rate at which the source and the first $k-1$
relays can communicate to the destination.

\subsection{Two-Relay Network with a Semideterministic Subnetwork}
The previous examples explain the effect of  decoding different
messages  at relays on the achieved rate. The form of the message
tree also affects the achievable rate of parity forwarding. In this
example another parity forwarding protocol is introduced for the
two-relay network of Section \ref{sec:2relays} with a different type
of message tree.

Consider the two-relay network defined in Section \ref{sec:2relays}.
 A parity forwarding protocol can
be devised for this network based on the message tree shown in
Fig.~\ref{fig:semi_det}. In this parity forwarding protocol, the first
relay decodes the source message and transmits two independent
parity messages for the source message. The second relay  decodes
one of the two parity messages sent by the first relay.

More precisely,  we have $\A_0=\{m_0\},
\A_1=\{m_{11},m_{12}\},\A_2=\{m_{21}\}$. The first relay
decodes $m_0$, i.e.,  $\D_1=\{m_0\}$, and the second relay
decodes $m_{11}$, i.e., $\D_2=\{m_{11}\}$. Consequently, the
source knows all other messages, and the first relay knows
$m_{21}$. Hence, $\C^x_{0}=\{m_{11},m_{12},m_{21}\}$,
$\C^x_{11}=\{m_{21}\}$,  $\C^x_{12}=\{m_{11},m_{21}\}$
(note that $m_{11}$ is in the known set of $m_{12}$, since
superposition encoding is used for encoding $m_{12}$ on top of
$m_{11}$ and $m_{21}$), and $\C^x_{21}=\{\}$. Since
$\D_1=\{m_0\}$ and $\A_1=\{m_{11},m_{12}\}$, we have
$\T_1=\{m_0\}$. Similarly, since $\D_2=\{m_{11}\}$ and
$\A_2=\{m_{21}\}$, we have $\T_2=\{m_{11}\}$. At the
destination $\D_3=\{m_0\}$, and
$\T_3=\{m_0,m_{11},m_{12},m_{21}\}$. Associating random
variables $X_0,X_{11},X_{12},X_{21}$ with messages
$m_0,m_{11},m_{12},m_{21}$, respectively,
Theorem~\ref{thm:total_rate} gives the following achievable rate
for an arbitrary two-relay network under any fixed distribution
$p(x_0,x_{11},x_{12},x_{21})$  (the detailed derivation is
omitted):
\begin{subequations}\label{eq:sem_det:rate}
\begin{align}
R_0&<I(X_0;Y_1|X_{11},X_{12},X_{21})\label{eq:2r_rateC_b}\\
R_0&<I(X_0,X_{12};Y_3|X_{11},X_{21})+I(X_{11};Y_2|X_{21})\label{eq:2r_rateC_c}\\
R_0&<I(X_0,X_{11},X_{12},X_{21};Y_3)\label{eq:2r_rateC_a}
\end{align}
\end{subequations}
The above rate can be shown to be the capacity of a two-relay
network if the channel from the source to the first relay is stronger
than the channel from the source to the second relay and the
destination, and the channel from the first relay to the second relay
is semideterministic \cite{elgamal_aref}.

Intuitively, this protocol is suitable for this network if we recall that
{\em partial} decoding at the relay is the optimal strategy for the
semideterministic single-relay channel.  In this channel, the
capacity-achieving strategy is for the source to split its message
into two messages \cite{elgamal_aref}; the relay decodes one of
them and forwards a parity message for it. Similar to the
single-relay case, in a multirelay network with a semideterministic
channel from the first relay to the second relay, the first relay
should split its message into two parts; the second relay only
partially decodes the message of the first relay. This is optimal as
shown in the next theorem.

\begin{figure}[t]
  \begin{center}
  \iftwocol
      \scalebox{0.5}{\includegraphics{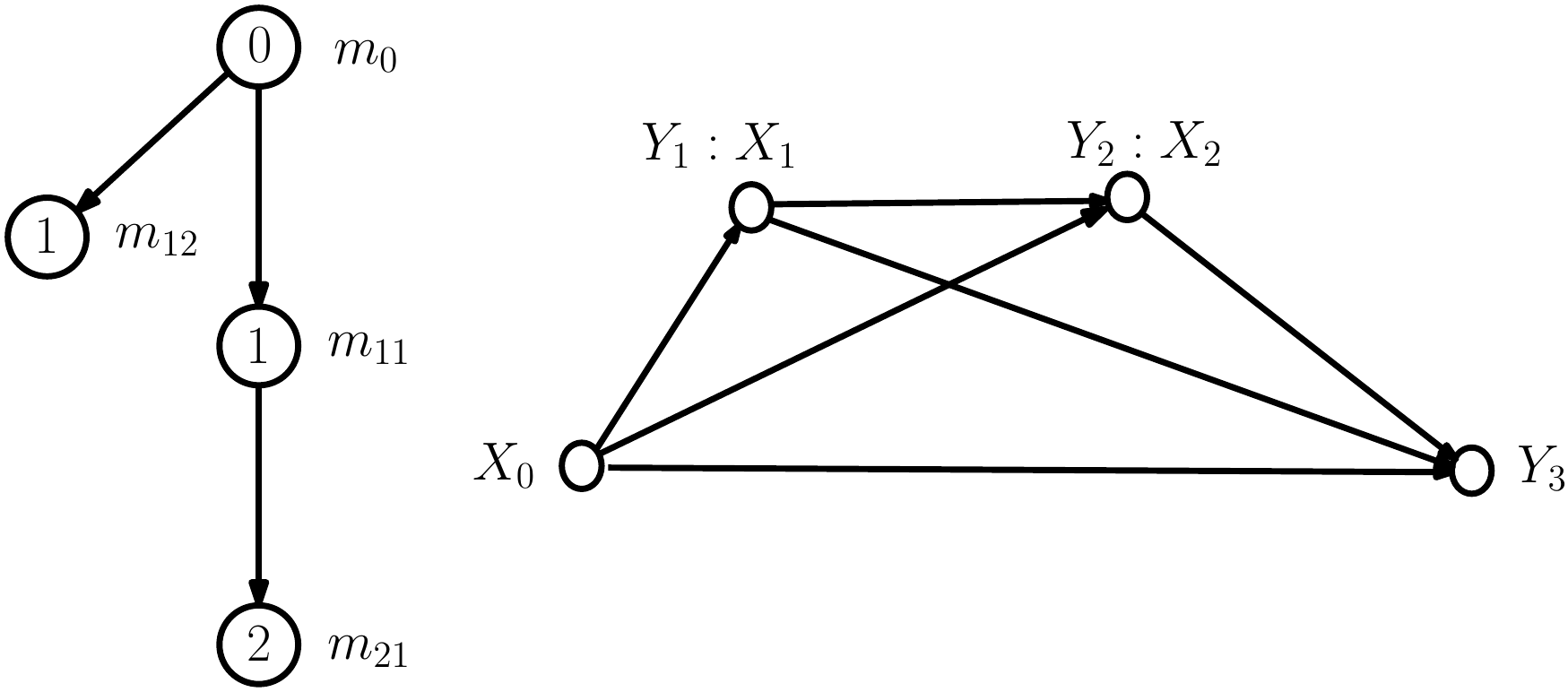}}
      \else
      \scalebox{0.7}{\includegraphics{2relay_tree3.pdf}}
      \fi
        \caption{Message tree for two-relay network with semideterministic
        channel from the first relay to the second
relay.\label{fig:semi_det}}
  \end{center}
\end{figure}

\begin{theorem}\label{thm:2r_deterministic_rate}
The  capacity of a two-relay network defined by
$p(y_1,y_2,y_3|x_0,x_1,x_2)$ in which $X_0-(X_1,X_2,Y_1)-Y_3$ is
a Markov chain, and the channel from the first relay to the second
relay is semideterministic, i.e., $Y_2=f(X_1,X_2)$ for a
deterministic function $f(\cdot , \cdot)$, is given by:
\begin{subequations}\label{eq:2r_det_rateC}
\begin{align}
R_0&<I(X_0;Y_1|X_1,X_2)\label{eq:2r_det_rateC_a}\\
R_0&<I(X_0;Y_3|X_1,X_2)+I(X_1;Y_3|Y_2,X_2)+H(Y_2|X_2)\label{eq:2r_det_rateC_b}\\
R_0&<I(X_0,X_1,X_2;Y_3)\label{eq:2r_det_rateC_c}
\end{align}
\end{subequations}
maximized over $p(x_0,x_1,x_2)$.
\end{theorem}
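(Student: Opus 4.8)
The plan is to prove \eqref{eq:2r_det_rateC} by showing it is both achievable and an outer bound. Achievability will follow by specializing the parity forwarding rate \eqref{eq:sem_det:rate} associated with the message tree of Fig.~\ref{fig:semi_det} to a particular choice of auxiliary random variables, and the converse will follow from the cut-set bound \eqref{eq:cut-set} after exploiting the determinism $Y_2=f(X_1,X_2)$ and the Markov chain $X_0-(X_1,X_2,Y_1)-Y_3$.

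For achievability, the crux is the choice of auxiliaries in \eqref{eq:sem_det:rate}. Given a target input distribution $p(x_0,x_1,x_2)$, I would set $X_{21}=X_2$, $X_{12}=X_1$ (the channel input of the first relay), keep the source variable as $X_0$, and, crucially, identify the lower superposition layer at the first relay with the deterministic output of the second relay, $X_{11}=Y_2=f(X_1,X_2)$. I first need to verify that this is legitimate within the parity forwarding framework: the required factorization $p(x_{21})\,p(x_{11}\mid x_{21})\,p(x_{12}\mid x_{11},x_{21})\,p(x_0\mid x_{11},x_{12},x_{21})$ can realize \emph{any} $p(x_0,x_1,x_2)$, because with $X_{11}$ a deterministic function of $(X_{12},X_{21})=(X_1,X_2)$ the factor $p(x_{12}\mid x_{11},x_{21})$ is just $p(x_1\mid x_2)$ restricted to the coset $\{x_1:f(x_1,x_2)=x_{11}\}$ and renormalized, while $p(x_0\mid x_{11},x_{12},x_{21})=p(x_0\mid x_1,x_2)$. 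Substituting into \eqref{eq:sem_det:rate} and repeatedly using that $Y_2$ is a function of $(X_1,X_2)$, the three bounds collapse exactly onto \eqref{eq:2r_det_rateC}: the first becomes $I(X_0;Y_1\mid X_1,X_2)$; in the second, $I(X_{11};Y_2\mid X_{21})=H(Y_2\mid X_2)$ and $I(X_0,X_{12};Y_3\mid X_{11},X_{21})=I(X_1;Y_3\mid Y_2,X_2)+I(X_0;Y_3\mid X_1,X_2)$ by the chain rule; and the sum bound becomes $I(X_0,X_1,X_2;Y_3)$. No Markov assumption is needed for this direction.

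For the converse, I would start from the cut-set bound \eqref{eq:cut-set}, valid for some $p(x_0,x_1,x_2)$. The third inequality \eqref{eq:cut-set:3} is already \eqref{eq:2r_det_rateC_c}. Expanding the first, $I(X_0;Y_1,Y_2,Y_3\mid X_1,X_2)=I(X_0;Y_1\mid X_1,X_2)+I(X_0;Y_2,Y_3\mid X_1,X_2,Y_1)$; since $Y_2=f(X_1,X_2)$ the last term equals $I(X_0;Y_3\mid X_1,X_2,Y_1)$, which is zero by the Markov chain $X_0-(X_1,X_2,Y_1)-Y_3$, yielding \eqref{eq:2r_det_rateC_a}. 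Expanding the second, $I(X_0,X_1;Y_2,Y_3\mid X_2)=I(X_0,X_1;Y_2\mid X_2)+I(X_0,X_1;Y_3\mid X_2,Y_2)$; the first term equals $H(Y_2\mid X_2)$ because $Y_2$ is determined by $(X_1,X_2)$, and the second, by the chain rule and absorbing the now-redundant $Y_2$, equals $I(X_1;Y_3\mid Y_2,X_2)+I(X_0;Y_3\mid X_1,X_2)$, yielding \eqref{eq:2r_det_rateC_b}. All three bounds hold for the single distribution furnished by the cut-set argument, which completes the converse.

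I expect the main obstacle to be the achievability step, and specifically recognizing \emph{which} auxiliary to use: identifying $X_{11}$ with $Y_2$ is exactly what makes the second-relay decoding constraint $R_{11}\le I(X_{11};Y_2\mid X_{21})$ sharpen to the entropy term $H(Y_2\mid X_2)$, mirroring the optimal partial-decoding strategy for the semideterministic single-relay channel of \cite{elgamal_aref}. One must also take care that this choice is consistent with the superposition structure imposed on the message tree (the known-set conditions of Definition~\ref{def:known_sets}). Once the auxiliary is fixed, the remaining work is routine chain-rule bookkeeping using $Y_2=f(X_1,X_2)$; the converse, by contrast, is essentially immediate from the cut-set bound.
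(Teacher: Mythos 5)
Your proposal is correct and follows essentially the same route as the paper: achievability by specializing the parity forwarding rate \eqref{eq:sem_det:rate} with the identification $X_{11}=Y_2=f(X_1,X_2)$, $X_{12}=X_1$, $X_{21}=X_2$ and collapsing the bounds via the chain rule and determinism, and the converse by expanding the cut-set bound \eqref{eq:cut-set} using $Y_2=f(X_1,X_2)$ and the Markov chain $X_0-(X_1,X_2,Y_1)-Y_3$. The only addition beyond the paper's argument is your explicit check that the superposition factorization with $X_{11}$ deterministic can realize any $p(x_0,x_1,x_2)$, a detail the paper leaves implicit.
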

\begin{proof}
For achievability, setting  $X_0=X_0$, $X_1=X_{12}$,
$X_2=X_{21}$,  and $X_{11}=Y_2$  reduces
\eqref{eq:sem_det:rate} to \eqref{eq:2r_det_rateC}. Note that
$X_{11}$ is a deterministic function of $X_{12}$ and $X_{21}$.
Using this fact, the derivation of \eqref{eq:2r_det_rateC_a} and
\eqref{eq:2r_det_rateC_c} is straightforward. The achievability of
\eqref{eq:2r_det_rateC_b} is proved in the following:
 \iftwocol
\begin{subequations}
\begin{align*}
&I(X_0,X_{12};Y_3|X_{11},X_{21})+I(X_{11};Y_2|X_{21})\nonumber\\
&\overset{(a)}{=}I(X_0;Y_3|X_{11},X_{12},X_{21})+ I(X_{12};Y_3|X_{11},X_{21})+\nonumber\\
& \qquad I(X_{11};Y_2|X_{21})\nonumber\\
&\overset{(b)}{=}I(X_0;Y_3|X_{12},X_{21})+I(X_{12};Y_3|X_{11},X_{21})\nonumber\\
&\qquad+ I(Y_2;Y_2|X_{2})\nonumber\\
&\overset{(c)}{=} I(X_0;Y_3|X_{1},X_2)+I(X_{1};Y_3|Y_{2},X_{2})+H(Y_2|X_2)\nonumber,
\end{align*}
\end{subequations}
\else
\begin{subequations}
\begin{align*}
I(X_0,X_{12};Y_3|X_{11},X_{21})&\overset{(a)}{=}I(X_0;Y_3|X_{11},X_{12},X_{21})+I(X_{12};Y_3|X_{11},X_{21})+ I(X_{11};Y_2|X_{21})\\
&\overset{(b)}{=}I(X_0;Y_3|X_{12},X_{21})+I(X_{12};Y_3|X_{11},X_{21})+\nonumber\\
&\qquad \qquad \qquad\qquad\qquad \qquad I(Y_2;Y_2|X_2)\nonumber\\
&\overset{(c)}{=} I(X_0;Y_3|X_{1},X_2)+I(X_{1};Y_3|Y_{2},X_{2})+H(Y_2|X_2),
\end{align*}
\end{subequations}\fi
where (a) follows from the chain rule for mutual information, (b)
follows since $X_{11}$ is a function of $X_{12}$ and $X_{21}$,
and (c) follows from $I(Y_2;Y_2|X_2)=H(Y_2|X_2)$.

To prove the converse, we use the cut-set bound
\eqref{eq:cut-set}. The Markov chain $X_0-(X_1,X_2,Y_1)-Y_3$
along with the condition $Y_2=f(X_1,X_2)$ leads to the Markov
chain $X_0-(X_1,X_2,Y_1)-(Y_2,Y_3)$ for this network. Hence, the
bound \eqref{eq:cut-set:1}  coincides with
\eqref{eq:2r_det_rateC_a}. The bound \eqref{eq:cut-set:3} is also
equivalent to \eqref{eq:2r_det_rateC_c}. It remains to prove that
\eqref{eq:2r_det_rateC_b} is achievable and meets the cut-set
bound (\ref{eq:cut-set:2}). This can be proved by expanding
(\ref{eq:cut-set:2}) as follows: \iftwocol
\begin{align}
R_0&<I(X_0,X_1;Y_2,Y_3|X_{2})\nonumber\\
&=I(X_0;Y_2,Y_3|X_1,X_2)+I(X_1;Y_2,Y_3|X_2)\nonumber\\
&=I(X_0;Y_3|X_1,X_2,Y_2)+I(X_0;Y_2|X_1,X_2)+\nonumber\\
& \qquad  I(X_1;Y_3|X_{2},Y_2)+I(X_1;Y_2|X_2)\nonumber\\
&=
I(X_0;Y_3|X_1,X_2,Y_2)+I(X_1;Y_3|X_{2},Y_2)+H(Y_2|X_2),\label{eq:2r_det_boundc}
\end{align}
\else
\begin{align}
R&\leq I(X_0,X_1;Y_2,Y_3|X_{2})\nonumber\\
&=I(X_0;Y_2,Y_3|X_1,X_2)+I(X_1;Y_2,Y_3|X_2)\nonumber\\
&=I(X_0;Y_3|X_1,X_2,Y_2)+\nonumber\\
& I(X_0;Y_2|X_1,X_2)+ I(X_1;Y_3|X_{2},Y_2)+I(X_1;Y_2|X_2)\nonumber\\
&=
I(X_0;Y_3|X_1,X_2,Y_2)+I(X_1;Y_3|X_{2},Y_2)+H(Y_2|X_2),\label{eq:2r_det_boundc}
\end{align}
\fi since $H(Y_2|X_1,X_2)=0$ and $I(X_0;Y_2|X_1,X_2)=0$ for
$Y_2=f(X_1,X_2)$.
\end{proof}

In addition to the two parity forwarding protocols described in
Section \ref{sec:2relays} and the above example, there are other
possible  parity forwarding protocols for a two-relay network as
well. For example, the source message may also be split to allow
partial decoding at the first relay as well. The next example
illustrates splitting the source message in the single-relay network.

\subsection{Generalized Decode-and-Forward \label{sec:ex:aref}}
Consider the single-relay channel as an example which illustrates
source message splitting in the parity forwarding framework. For
example, the source may send two messages $m_{01}$, $m_{0}$.
The message $m_{01}$ is a random bin index for $m_{0}$. This is
equivalent to splitting $m_{0}$ into two parts. The relay may only
decode $m_{01}$ which is of a lower rate. This strategy, which is
similar\footnote{The scheme described here where the source
message split using a parity message is slightly different from the
generalized DF approach of \cite[Theorem~7]{cover_elgamal},
where the two source messages are independent of each other.
However, the two schemes result in the same rate.} to the
generalized DF \cite[Theorem~7]{cover_elgamal}, increases the DF
rate for example in semideterministic relay channel
\cite{elgamal_aref} or a relay channel with an orthogonal
source-relay channel \cite{elgamal_zahedi}.

\begin{figure}
\centering
\includegraphics[scale=0.5]{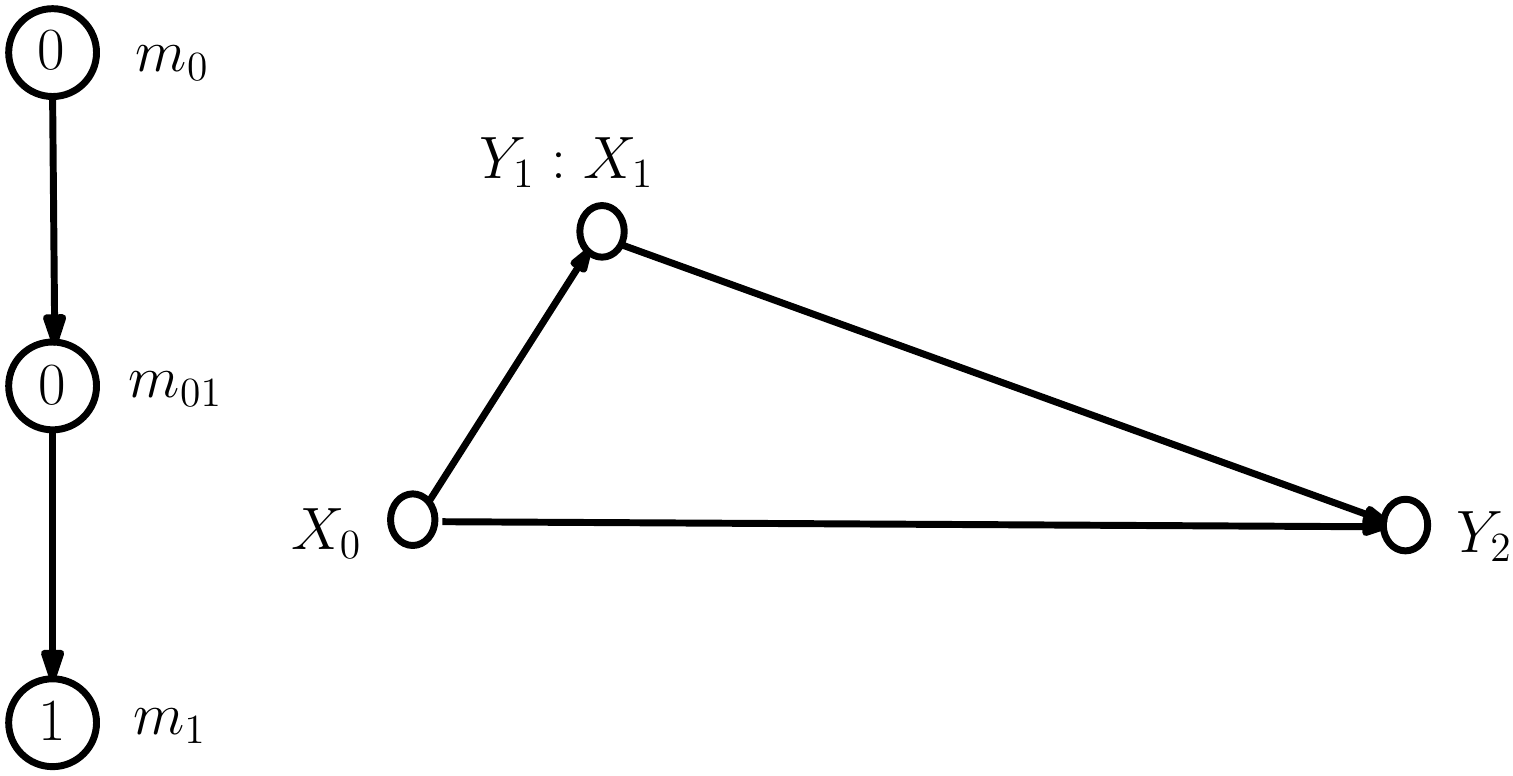}
\caption{The message tree for single-relay generalized decode-and-forward.\label{fig:2relays_tree}}
\end{figure}

The message tree used for this single-relay network is shown in
Fig.~\ref{fig:2relays_tree}, which is similar to the message tree for
the two-relay examples in Section \ref{sec:2relays}. The difference
lies in the way the messages are associated with the source and
the relay for the single-relay network. In this case, the message
sets are defined as follows. The source sends both $m_{0}$ and
$m_{01}$, i.e., $\A_0=\{m_{0},m_{01}\}$. The relay decodes only
$m_{01}$, i.e., $\D_1=\{m_{01}\}$, and sends a parity for
$m_{01}$, i.e., $\A_1=\{m_1\}$. Thus, the known message sets for
each message are given by $\C^m_1=\{\}$,
$\C^m_{01}=\{m_1\}$, and $\C^m_{0}=\{m_1,m_{01}\}$.
Superposition encoding is used to encode messages at the source
and at the relay by associating the random variables $X_1$,
$X_{01}$, and $X_{0}$ with messages $m_1,m_{01}$, and $m_0$,
respectively. The relay encodes $m_1$ by forming a random
codebook of size $2^{nR_1}$. The source first encodes $m_{01}$
superimposed on $m_1$, and then superimposes $m_0$ on top of
$m_{01}$ and $m_1$.

Theorem \ref{thm:total_rate} gives the following rate constraints
for this protocol. At the relay,
\begin{subequations}\label{eq:gdf:rate:1}
\begin{align}
R_{01}&\leq I(X_{01};Y_1|X_1),\label{eq:R_01}\\
\intertext{and at the destination}
R_0&\leq I(X_0;Y_2|X_{01},X_1)+R_{01}\label{eq:R_0:R_01}\\
R_0&\leq I(X_0;Y_2|X_{01},X_1)+I(X_{01};Y_2|X_1)+R_1\label{eq:dgf:2omit}\\
R_0&\leq I(X_0;Y_2|X_{01},X_1)+I(X_{01};Y_2|X_1)+I(X_1;Y_2)\label{eq:chaining}.
\end{align}
\end{subequations}
The above can be simplified by using the chain rule for mutual
information and ignoring \eqref{eq:dgf:2omit} because $R_1$ is
unbounded. Thus, Theorem \ref{thm:total_rate} gives the following
rate maximized over $p(x_0,x_{01},x_1)$  for this protocol:
\begin{subequations}\label{eq:gdf:rate}
\begin{align}
R_0&\overset{(a)}{\leq} I(X_0;Y_2|X_{01},X_1)+I(X_{01};Y_1|X_1)\\
R_0&\overset{(b)}{\leq} I(X_0,X_{01},X_1;Y_2)\nonumber\\
&\overset{(c)}{=}I(X_0,X_1;Y_2).
\end{align}
\end{subequations}
where (a) follows by combining \eqref{eq:R_01} and
\eqref{eq:R_0:R_01}, (b) follows by applying the chain rule for
mutual information to \eqref{eq:chaining}, and (c) follows from
$X_{01}-(X_0,X_1)-Y_2$ as $X_{01}$ can only affect $Y_2$
through the channel inputs $X_0$ and $X_1$. Note that the rate in
\eqref{eq:gdf:rate} is equal to the rate achieved by the generalized
decode-and-forward method (also known as partial
decode-and-forward) of \cite[Theorem 7]{cover_elgamal}.

\section{Conclusions\label{sec:conclusions}}
This paper formulates a class of DF strategies  for an arbitrary
multirelay network. In this set of strategies, called parity
forwarding, relay nodes forward bin indices for the messages of
other transmitters. The message tree structure is utilized to
characterize the encoding and decoding procedures and the
dependencies between messages and their bin indices.  Parity
forwarding improves previous DF strategies because of its
flexibility, and achieves the capacities of new types of degraded
multirelay networks. To derive closed-form expressions for the
achievable rate, we restricted ourselves to the superposition
broadcast encoding. In addition, we also restrict ourselves to a
linear ordering of relays, thus the rates derived in this paper do not
account for the possibility of parallel relaying
\cite{schein_gallager}. Further generalization of this work is
possible by using more advanced broadcast schemes and more
flexible relay topology.


\bibliographystyle{IEEE}
\bibliography{IEEEabrv,reference}

\end{document}